
\documentclass[onecolumn, draftclsnofoot,11pt]{IEEEtran}

\addtolength{\topmargin}{9mm}

%
%
\usepackage{graphicx}
\usepackage{subfigure}
\usepackage{physics}
\usepackage{amsfonts}
\usepackage{amsthm}
\usepackage[colorlinks=true, allcolors=blue]{hyperref}
\usepackage[utf8]{inputenc} 
\usepackage[T1]{fontenc}
\usepackage{url}
\usepackage{ifthen}
\usepackage{cite}


\interdisplaylinepenalty=2500 


\newtheorem{mydefinition}{Definition}
\newtheorem{mylemma}[mydefinition]{Lemma}

\newtheorem{mytheorem}[mydefinition]{Theorem}
\newtheorem{myremark}[mydefinition]{Remark}



\global\long\def\RR{\mathbb{R}}

\global\long\def\EE{\mathbb{E}}
\global\long\def\PP{\mathbb{P}}



\def \calD {\mathcal{D}}
\def \calE {\mathcal{E}}

\def \calH {\mathcal{H}}
\def \calI {\mathcal{I}}

\def \calP {\mathcal{P}}

\def \calU {\mathcal{U}}

\def \calX {\mathcal{X}}

\def \tildeP {\tilde{P}}
\def \tildeX {\tilde{X}}

\def \RR {\mathbb{R}}



\newcommand{\round}[1]{\left({#1}\right)}

\def \cvar {\textbf{CVaR}}
\def \tcvar {\text{CVaR}}
\def \bcvar {\overline{{\textbf{CVaR}}}}


\usepackage{stackengine}




\begin{document}
\title{\huge Enhancing Quantum Expectation Values via Exponential Error Suppression and CVaR Optimization} 




\author{
\IEEEauthorblockN{Touheed Anwar Atif, Reuben Blake Tate, and Stephan Eidenbenz\\}
\IEEEauthorblockA{Los Alamos National Laboratory \\
tatif@lanl.gov, rtate@lanl.gov, and  eidenben@lanl.gov 
}}

\maketitle


\begin{abstract}
Precise quantum expectation values are crucial for quantum algorithm development, but noise in real-world systems can degrade these estimations. While quantum error correction is resource-intensive, error mitigation strategies offer a practical alternative. This paper presents a framework that combines Virtual Channel Purification (VCP) technique with Conditional Value-at-Risk (CVaR) optimization to improve expectation value estimations in noisy quantum circuits.
Our contributions are twofold: first, we derive conditions to compare CVaR values from different probability distributions, offering insights into the reliability of quantum estimations under noise. Second, we apply this framework to VCP, providing analytical bounds that establish its effectiveness in improving expectation values, both when the overhead VCP circuit is ideal (error-free) and when it adds additional noise. By introducing CVaR into the analysis of VCP, we offer a general noise-characterization method that guarantees improved expectation values for any quantum observable. We demonstrate the practical utility of our approach with numerical examples, highlighting how our bounds guide VCP implementation in noisy quantum systems.
\end{abstract}

\section{Introduction}

 


The accurate computation of quantum expectation values plays a crucial role in the development and verification of a plethora of quantum algorithms \cite{peruzzo2014variational,farhi2014quantum,wecker2015progress,mcclean2016theory,biamonte2017quantum}. However, real-world quantum systems are often plagued by various types of noise, including gate errors, measurement inaccuracies, and decoherence, which can significantly affect the accuracy of these estimations. Although quantum error correction has been proposed as a means to combat this noise by encoding logical qubits in multiple physical qubits \cite{lidar2013quantum}, the overhead required for error correction is too large for near-term devices \cite{lee2021even}. A near term alternative is to employ error mitigation strategies, which improve the accuracy of quantum computations by obtaining valuable information from noisy quantum circuits without the need to physically recover a noiseless state \cite{cai2023quantum,temme2017error,kim2023evidence}.



In this paper, we introduce a framework that enhances quantum expectation value estimations by combining exponential error suppression techniques \cite{huggins2021virtual,koczor2021exponential,liu2024virtual} with Conditional Value-at-Risk ($\tcvar$) optimization \cite{acerbi2002coherence}. Specifically, we focus on a recent error suppression method called Virtual Channel Purification (VCP), introduced by Liu et al \cite{liu2024virtual}, that aims to suppress noise in quantum circuits without requiring detailed knowledge of error models, input/output states, or specific problem contexts.
The authors here primarily demonstrated the technique's effectiveness by looking at the fidelity of quantum states. However, VCP as a virtual technique is most impactful when applied to the expectation values of observables, a perspective that we explore in this work.

Conditional Value-at-Risk ($\tcvar$), also known as Expected Shortfall, is a risk measure that considers the tail risks of a probability distribution. In the respect of quantum algorithms, the authors in \cite{barkoutsos2020improving} 
proposed a modification to the classical optimization algorithm using $\tcvar$ and showed performance improvements for VQE and QAOA. In \cite{barron2024provable}, the authors demonstrated the robustness of $\tcvar$ against noise toward generating better expectation values and provided bounds on noise-free expectation values. 



Our contributions are twofold. First, we derive conditions on two probability distributions that enable a comparison of the $\tcvar$ values produced by samples from each distribution. This comparison is crucial for assessing the reliability of quantum expectation value estimations in the presence of noise, providing insights into which distributions leads to more robust outcomes. Second, we apply this method to the study of the VCP technique. In the idealized regime where the purification circuit, particularly the additional swap network, is error-free, and only the unitary being purified has incoherent noise, we provide analytical bounds that establish the efficacy of the VCP method in enhancing the precision of expectation values. The assumption on incoherent noise is valid in most practical scenarios, particularly with the application of techniques like Pauli-twirling \cite{koczor2021dominant,dalzell2024random,liu2024virtual}.
When noise is present in the swap network, which introduces additional noise the purification circuit, we extend our analysis to provide inner bounds that characterize the region where the technique offers an advantage in terms of error parameters. 

By introducing CVaR into the analysis of VCP, as  our first result for noise-free swap network we provide a provable guarantee for improving expectation values of any quantum observable  diagonal in the computational basis (Theorem \ref{thm:noisy_vcp}). 
In Theorem \ref{thm:vcp_comp}, we analyze and compare different orders of VCP circuits in terms of their ability to reduce noise using $\tcvar$. We extend this analysis to a multilayer generalization in Theorem \ref{thm:vcp_comp_genralized}.
Next, we consider the case where the swap network of the VCP circuit introduces additional noise. Here, we first provide an exact form for the effective virtual channel when a Clifford is purified employing the VCP circuit that uses a noisy swap circuit (Theorem \ref{thm:noisy_vcp_ratio_obs}). Its generalization to higher orders is described in Theorem \ref{thm:noisy_vcp_ratio_obs_orderM}. Building on these theorems, our final result in Theorem \ref{thm:vcp_noisy} characterizes an inner bound for the parameter region where the VCP technique remains advantageous despite a noisy swap network.

We further demonstrate the practical implications of our findings through numerical examples. The first example (Section \ref{sec:example1}) examines the region defined by Theorem \ref{thm:vcp_noisy} for depolarizing noise models and provides a characterization in terms of the depolarizing parameters $p$ and $q$ corresponding to the swap network noise and Clifford noise, respectively. The second example (Section \ref{sec:example2}) explores a scenario involving a collection of IID Clifford gates, highlighting the advantageous region in terms of the number of gates and the associated error parameters.
These examples
showcase how our bounds can guide the implementation of VCP in realistic quantum systems and improve its performance under noise. 

The paper is organized as follows: Section \ref{sec:prelims} presents the preliminaries on exponential error suppression and sets up the notation.
Section \ref{sec:main_results} provides the main results focusing on the application of our results to the VCP technique, deriving bounds and characterizing the error parameters. Section \ref{sec:examples} provides numerical examples to illustrate our findings and discusses their practical implications. Finally, Section \ref{sec:conclusion} concludes with a discussion on future directions for research and potential applications of the proposed methodology in quantum computing. The proofs are provided in the appendix.

\section{Preliminaries and Notation}\label{sec:prelims}

\subsection{Virtual Channel Purification (VCP)}

VCP builds on the concept of exponential error mitigation, specifically the virtual state preparation technique. It uses multiple copies of a noisy quantum channel, rather than a state, to implement a purified channel, with fidelity improving exponentially with the number of copies. The technique combines ideas from the Choi–Jamiołkowski isomorphism and flag fault-tolerance to enable noise reduction without physically purifying the state or channel, relying instead on classical postprocessing and standard quantum circuits. This makes VCP an efficient, hardware-friendly solution for quantum error mitigation, particularly in tasks where accurate expectation values are crucial.
Formally, we have the following.
Consider a case where the ideal operation is $\calU$, but in practice, the implemented operation is the noisy version $\calU_\calE = \calE \calU$, where 
\[
\calE = p_0 \calI + \sum_{i=1}^{4^N-1}p_i \calE_i,
\]
represents the noise that distorts the ideal operation, $N$ is the number of qubits on which $\calU$ acts, $\calI$ denotes the identity channel, and $\calE_i(\rho) = E_i \rho E_i^\dagger$ are Pauli noise components. It is assumed that $p_0 > p_i$ for all $i$. The objective of VCP is to implement $\calU_{\calE^{(M)}} = \calE^{(M)}\calU$, where $\calE^{(M)}$ is a purified noise channel defined as
\[
\calE^{(M)} = \frac{1}{\sum_{i=0}^{4^N-1}p_i^M} \left(p_0^M \calI + \sum_{i=0}^{4^N-1} p_i^M \calE_i\right).
\]


\subsection{Conditional Value-at-Risk}
\begin{mydefinition}[Conditional Value-at-Risk]
    Consider an integrable real-valued random variable $X$ with CDF $F_X : R \rightarrow [0,1]$. Then, the (lower) CVaR at level $\alpha \in (0,1]$ is defined as 
    \[
    \cvar_\alpha[X] := \alpha^{-1} \EE[X;X\leq x_\alpha] + x_{\alpha}(1-\PP(X \leq x_\alpha)),
    \]
    where $x_{\alpha} := inf_{x \in \RR}[F_X(x) \geq \alpha]. $
\end{mydefinition}


In the case when $F_X(x_\alpha) = \alpha$, $\cvar$ simplifies to $\cvar_\alpha[X] = \EE[X|X\leq x_\alpha]$, i.e., the expectation of $X$ when  $X$ is conditioned to take values in its bottom $\alpha$ quantile. The upper $\cvar$ is defined as
\[
\bcvar_\alpha[X] = -\cvar_\alpha[-X].
\]

\textbf{Notations.} The set of density operators on Hilbert space $\calH_A$ is denoted by $\calD(A)$. Let $[\Theta] := \{1,2,\cdots,\Theta\}$.

\section{Main Results} \label{sec:main_results}

\subsection{Conditional Value-at-Risk}

\begin{mylemma}\label{lem:cvarIneq}
    Consider three random variables $X, X_1,$ and $X_2$, defined on a set $\calX$ with distributions $P_X, P_{X_1},$ and $P_{X_2}$, respectively. For positive constants $C_1 \geq  C_2 \geq 1$, the given distributions are such that for all $x\in \calX$,
    \begin{align}
        \frac{P_X(x)}{C_1} &\leq P_{X_1}(x) , \quad \text{and}\quad
        \frac{P_X(x)}{C_2} \leq P_{X_2}(x) \leq \frac{C_1}{C_2} P_{X_1}(x).
    \end{align}
     Then, {with $\alpha_1 := 1/C_1$ and $\alpha_2 := 1/C_2$,} we have the following set of inequalities
     \begin{align}
         \cvar_{\alpha_1}[X_1] &\leq \cvar_{\alpha_2}[X_2] \leq \EE[X] \leq \bcvar_{\alpha_2}[X_2] \leq \bcvar_{\alpha_1}[X_1] 
     \end{align}
Therefore, the $\tcvar$ of $X_1$ establishes lower and upper bounds for the $\tcvar$ of $X_2$, which, in turn, bounds the expectation of $X$ from below and above, respectively.
\end{mylemma}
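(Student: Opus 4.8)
The plan is to sidestep the quantile definition entirely and work instead through the dual (envelope) representation of the lower $\tcvar$. For an integrable random variable $Y$ with law $P_Y$ on $\calX$, this representation reads
\begin{align}
\cvar_\alpha[Y] = \inf\curly{ \EE_Q[Y] \ : \ Q \text{ a probability measure on } \calX, \ \tfrac{dQ}{dP_Y} \leq \tfrac{1}{\alpha} },
\end{align}
with the infimum attained by the law that reweights $P_Y$ to place the maximal admissible density $\alpha^{-1}P_Y$ on the lower tail. I would first record this standard characterization, since it recasts each $\tcvar$ as a minimization over a feasible family of tilted laws and lets me compare $\tcvar$s of \emph{different} distributions just by comparing feasible sets. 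Writing $\mu(Q)$ for the expectation of the common real value under a law $Q$ on $\calX$ (the three variables differ only through their laws, not their values), every problem shares the same linear objective $\mu$.

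Next I would exploit a symmetry to halve the work. Negating $X, X_1, X_2$ reflects each density and leaves all three pointwise inequalities intact, so the hypotheses are invariant under $x \mapsto -x$. Because $\bcvar_\alpha[Y] = -\cvar_\alpha[-Y]$, the entire upper half of the chain follows from the lower half applied to $-X,-X_1,-X_2$. It therefore suffices to prove
\begin{align}
\cvar_{\alpha_1}[X_1] \leq \cvar_{\alpha_2}[X_2] \leq \EE[X],
\end{align}
where $\alpha_1 = 1/C_1 \leq 1/C_2 = \alpha_2$, both in $(0,1]$ thanks to $C_1 \geq C_2 \geq 1$.

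The two remaining inequalities then fall out of feasibility transfers. For the right one, $P_X(x)/C_2 \leq P_{X_2}(x)$ rearranges to $P_X \leq \alpha_2^{-1}P_{X_2}$, so $P_X$ is itself feasible in the problem defining $\cvar_{\alpha_2}[X_2]$; evaluating the objective at $Q = P_X$ gives $\cvar_{\alpha_2}[X_2] \leq \mu(P_X) = \EE[X]$. For the left one, let $Q^\star$ attain the infimum for $\cvar_{\alpha_2}[X_2]$, so $Q^\star \leq \alpha_2^{-1}P_{X_2} = C_2 P_{X_2}$; chaining with $P_{X_2} \leq (C_1/C_2)P_{X_1}$ yields $Q^\star \leq C_1 P_{X_1} = \alpha_1^{-1}P_{X_1}$, which is exactly the feasibility constraint for $\cvar_{\alpha_1}[X_1]$. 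Hence $Q^\star$ is admissible there and $\cvar_{\alpha_1}[X_1] \leq \mu(Q^\star) = \cvar_{\alpha_2}[X_2]$. (The first hypothesis $P_X/C_1 \leq P_{X_1}$ is then seen to be redundant, following from the other two.)

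The main obstacle I anticipate is not the algebra but the rigorous justification of the variational representation and the attainment of $Q^\star$. In a general continuous setting one must check absolute continuity of the competing measures — here supplied automatically by the density dominations, e.g. $P_X \leq C_2 P_{X_2}$ forces $P_X \ll P_{X_2}$ — and confirm the infimum is a genuine minimum, including the boundary case $F_X(x_\alpha) \neq \alpha$ where the definition carries the extra atom term; if attainment is delicate one can instead work with $\varepsilon$-minimizers and pass to the limit. In the finite setting relevant to measurement outcomes of an observable diagonal in the computational basis, the problem is a linear program, so both the representation and the attainment of $Q^\star$ are immediate and the argument closes directly.
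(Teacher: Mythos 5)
Your proof is correct, and it takes a genuinely different route from the paper's. The paper proves the leftmost inequality by an explicit extremal construction: holding $P_X$ and $P_{X_1}$ fixed, it builds the worst-case law $P_{X_2}$ (saturating the cap $\frac{C_1}{C_2}P_{X_1}$ below a threshold $x_0$ and the floor $P_X/C_2$ above it) and compares quantiles and tail integrals by hand; the two middle inequalities are imported from \cite[Lemma 1]{barron2024provable}, and the rightmost pair follows by the same negation trick you use. You instead pass to the risk-envelope (dual) representation $\cvar_\alpha[Y]=\inf\curly{\EE_Q[Y]:dQ/dP_Y\le \alpha^{-1}}$, which turns everything into feasibility bookkeeping: $P_X$ is feasible for the $\alpha_2$-problem, giving $\cvar_{\alpha_2}[X_2]\le\EE[X]$ without any citation, and $\alpha_2^{-1}P_{X_2}=C_2P_{X_2}\le C_1P_{X_1}=\alpha_1^{-1}P_{X_1}$ shows the $\alpha_2$-feasible set is contained in the $\alpha_1$-feasible set, giving $\cvar_{\alpha_1}[X_1]\le\cvar_{\alpha_2}[X_2]$. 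Note that this set inclusion makes attainment of $Q^\star$ irrelevant --- infima over nested feasible sets are automatically ordered --- so the $\varepsilon$-minimizer caveat in your last paragraph can be dropped entirely. Your approach buys uniformity (the atom case $F_X(x_\alpha)\neq\alpha$, which the paper defers with ``the general case follows from extending the same arguments,'' is covered without extra work) and your observation that the hypothesis $P_X/C_1\le P_{X_1}$ is redundant is correct and not noted in the paper. What it costs is reliance on the dual representation itself: this is a standard fact (linear-programming duality in the finite setting relevant to measurement outcomes; Rockafellar--Uryasev/Acerbi--Tasche in general), but it is the one ingredient you must either cite or prove, and one should note it reproduces the standard CVaR convention in which the boundary term reads $x_\alpha\round{1-\alpha^{-1}\PP(X\le x_\alpha)}$, which is evidently what the paper intends in its definition.
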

\begin{proof}
    A proof to the lemma is provided in Appendix \ref{proof:lem:cvarIneq}.
\end{proof}

\subsection{Application to Virtual Channel Purification}
Let $\rho_0 \in \calD(A)$ and $\rho$ is defined as $\rho := \calU(\rho_0)$. Consider a observable $O$, diagonal in the computational basis, corresponding to which the measurements are made. Let the noise $\calE$ associated with the implementation of $\calU$ be such that its identify component is dominant, i.e., $p_0 \geq p_i$ for all $i\in [4^N-1]$.
\begin{mytheorem}\label{thm:noisy_vcp}
    Let $\tilde{\rho} := \calE(\calU(\rho_0))$ denote the noisy sample of $\rho$, and $\tilde{\rho}^{(L)} 
    := \calU_{\calE^{(L)}}(\rho)$ denote the version of ${\rho}$ that has been purified using VCP of order-$L$ ($L > 1$). Let $X, {X}_{noisy}$,  and ${X}_{vcp}^{L}$ denote the  random variables corresponding to the measurement of $\rho , \tilde\rho,$ and $\tilde\rho^{(L)}$, with respect to $O$, respectively. We then have:
    \begin{align}
         \cvar_{p_0}[{X}_{noisy}] &\leq \cvar_{\alpha_L}[{X}_{vcp}^{L}] 
         \leq \EE[X] \leq \bcvar_{\alpha_{L}}[{X}_{vcp}^{L}]  \bcvar_{p_0}[{X}_{noisy}],
     \end{align}    
     where $\alpha_L = \frac{p_0^L}{\sum_{i=0}^{4^N-1}p_i^L}$.
\end{mytheorem}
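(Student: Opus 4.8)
The plan is to reduce the statement to a direct application of Lemma~\ref{lem:cvarIneq} by making the right identification of random variables and constants. Since $O$ is diagonal in the computational basis, measuring any state $\sigma$ produces the outcome $O_x := \langle x| O |x\rangle$ with probability $\langle x|\sigma|x\rangle$; hence all three of $X$, $X_{noisy}$, $X_{vcp}^{L}$ live on the common sample space $\calX$ of computational-basis labels, share the identical value function $x\mapsto O_x$, and differ only in their probability weights. I would set $P_X(x)=\langle x|\rho|x\rangle$, $P_{X_{noisy}}(x)=\langle x|\calE(\rho)|x\rangle$, and $P_{X_{vcp}^{L}}(x)=\langle x|\calE^{(L)}(\rho)|x\rangle$. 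The structural fact I would isolate first is that each Pauli component permutes the diagonal: since $E_i|x\rangle = c_{i,x}|\sigma_i(x)\rangle$ for a phase $c_{i,x}$ and a permutation $\sigma_i$, one has $q_i(x):=\langle x|\calE_i(\rho)|x\rangle = P_X(\sigma_i^{-1}(x))\ge 0$, so each $q_i$ is again a probability distribution on $\calX$; only its nonnegativity will be needed.

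With this notation, $P_{X_{noisy}}(x)=p_0 P_X(x)+\sum_{i\ge1}p_i q_i(x)$ and $P_{X_{vcp}^{L}}(x)=S_L^{-1}\bigl(p_0^{L}P_X(x)+\sum_{i\ge1}p_i^{L}q_i(x)\bigr)$, where $S_L:=\sum_{i=0}^{4^N-1}p_i^{L}$. I would then take $C_1:=1/p_0$ and $C_2:=S_L/p_0^{L}$, so that $\alpha_1=1/C_1=p_0$ and $\alpha_2=1/C_2=\alpha_L$, matching the claimed levels. The conditions $P_X/C_1\le P_{X_{noisy}}$ and $P_X/C_2\le P_{X_{vcp}^{L}}$ are then immediate, obtained by discarding the nonnegative tail sums $\sum_{i\ge1}p_i q_i(x)$ and $\sum_{i\ge1}p_i^{L}q_i(x)$, respectively.

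The only substantive inequality, and the main obstacle, is the sandwich condition $P_{X_{vcp}^{L}}(x)\le (C_1/C_2)\,P_{X_{noisy}}(x)=p_0^{L-1}S_L^{-1}P_{X_{noisy}}(x)$. Clearing $S_L$ and cancelling the common $p_0^{L}P_X(x)$ term reduces it to $\sum_{i\ge1}p_i^{L}q_i(x)\le p_0^{L-1}\sum_{i\ge1}p_i q_i(x)$, that is $\sum_{i\ge1}p_i\bigl(p_i^{L-1}-p_0^{L-1}\bigr)q_i(x)\le 0$, which holds termwise exactly because the identity component dominates ($p_i\le p_0$) and $L>1$, forcing $p_i^{L-1}\le p_0^{L-1}$; this is precisely where both hypotheses of the theorem enter. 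It remains to verify the admissibility ordering $C_1\ge C_2\ge 1$: here $C_2\ge1$ since $S_L\ge p_0^{L}$, and $C_1\ge C_2$ since $p_0^{L-1}\ge S_L$, which follows from $\sum_i p_i^{L}=\sum_i p_i^{L-1}p_i\le p_0^{L-1}\sum_i p_i=p_0^{L-1}$ using the trace-preservation identity $\sum_i p_i=1$. With every hypothesis of Lemma~\ref{lem:cvarIneq} confirmed, its conclusion yields the stated chain of $\tcvar$ inequalities verbatim. For completeness I would finally remark that possible degeneracies of $O$ pose no difficulty, because the pointwise distribution inequalities are preserved under summing the weights over each eigenvalue fiber.
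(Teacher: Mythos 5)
Your proposal is correct and follows essentially the same route as the paper's own proof: identify the three diagonal-measurement distributions, obtain the lower bounds $P_{X_{noisy}}\ge p_0 P_X$ and $P_{X_{vcp}^{L}}\ge \alpha_L P_X$ by discarding nonnegative Pauli terms, establish the sandwich $P_{X_{vcp}^{L}}\le \tfrac{\alpha_L}{p_0}P_{X_{noisy}}$ termwise from $p_i\le p_0$ and $L>1$, and invoke Lemma~\ref{lem:cvarIneq} with $C_1=1/p_0$, $C_2=1/\alpha_L$. Your explicit verification of the admissibility ordering $C_1\ge C_2\ge 1$ (via $\sum_i p_i^{L}\le p_0^{L-1}$) is a small but worthwhile addition that the paper's proof leaves implicit.
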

\begin{proof}
    A proof is provided in Appendix \ref{proof:thm:noisy_vcp}.
\end{proof}
\begin{mytheorem}\label{thm:vcp_comp}
    Define the noisy versions of $\rho$ that are purified using an order-$L$ and an order-$M$ VCP as  $\tilde{\rho}^{(L)} = \calU_{\calE^{(L)}}(\rho)$ and $\tilde{\rho}^{(M)} = \calU_{\calE^{(M)}}(\rho)$, respectively, for $L < M$. Let $X, \tilde{X}_L$,  and $\tilde{X}_M$ denote the  random variables corresponding to the measurement of $\rho , \tilde\rho^{(L)},$ and $\tilde\rho^{(M)}$, with respect to $O$. We then have the following relation:
    \begin{align}
         \cvar_{\alpha_L}[\tildeX_L] &\leq \cvar_{\alpha_M}[\tildeX_{M}] \leq \EE[X] \leq \bcvar_{\alpha_{M}}[\tildeX_{M}] \leq \bcvar_{\alpha_L}[\tildeX_L],
     \end{align}    
     for all $1\leq L < M$, where $\alpha_{x} = \frac{p_0^{x}}{\sum_{i=0}^{4^N-1}p_i^{x}}$.
\end{mytheorem}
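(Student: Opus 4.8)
The plan is to deduce the chain directly from Lemma~\ref{lem:cvarIneq} by taking the two purified samples as the auxiliary variables and matching the constants to the prescribed $\cvar$ levels. Concretely, I would invoke the lemma with $X_1 = \tildeX_L$, $X_2 = \tildeX_M$, and $X$ the ideal random variable, and set $C_1 = 1/\alpha_L = S_L/p_0^L$ and $C_2 = 1/\alpha_M = S_M/p_0^M$, writing $S_k := \sum_{i=0}^{4^N-1} p_i^k$. All three variables share the same sample space of computational-basis outcomes and assign to each outcome $x$ the same eigenvalue of $O$; only their distributions differ, so the lemma's hypotheses reduce to pointwise inequalities among these distributions.

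First I would record the distributions. Because $O$ is diagonal in the computational basis, the outcome probability for a state $\sigma$ is $\langle x|\sigma|x\rangle$. Writing $q_0(x) := \langle x|\rho|x\rangle = P_X(x)$ and $q_i(x) := \langle x|E_i\rho E_i^\dagger|x\rangle \ge 0$ for $i \ge 1$, each purified state $\tilde\rho^{(L)} = \calE^{(L)}(\rho)$, $\tilde\rho^{(M)} = \calE^{(M)}(\rho)$ is a convex combination of these, so
\begin{align}
P_{\tildeX_L}(x) = \frac{1}{S_L}\sum_{i=0}^{4^N-1} p_i^L\, q_i(x), \qquad P_{\tildeX_M}(x) = \frac{1}{S_M}\sum_{i=0}^{4^N-1} p_i^M\, q_i(x).
\end{align}
The ordering $C_1 \ge C_2 \ge 1$ follows from the dominance assumption $p_0 \ge p_i$: the sequence $C_k = S_k/p_0^k = \sum_i (p_i/p_0)^k$ is nonincreasing in $k$ since each ratio $p_i/p_0 \le 1$, and $C_k \ge 1$ because the $i=0$ term equals $1$; hence $C_1 = C_L \ge C_2 = C_M \ge 1$ as $L < M$.

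Next I would verify the three distributional inequalities. The two lower bounds are immediate: discarding the nonnegative $i \ge 1$ terms gives $P_{\tildeX_L}(x) \ge p_0^L q_0(x)/S_L = P_X(x)/C_1$ and likewise $P_{\tildeX_M}(x) \ge P_X(x)/C_2$. The remaining comparison $P_{\tildeX_M}(x) \le (C_1/C_2)\,P_{\tildeX_L}(x)$, after cancelling the normalizations, is equivalent to the pointwise estimate
\begin{align}
\sum_{i=0}^{4^N-1} p_i^M\, q_i(x) \le p_0^{M-L}\sum_{i=0}^{4^N-1} p_i^L\, q_i(x),
\end{align}
which is the heart of the argument. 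It holds term by term: since $M - L > 0$ and $p_i \le p_0$, we have $p_i^{M-L} \le p_0^{M-L}$, so $p_i^M q_i(x) = p_i^{M-L}\,(p_i^L q_i(x)) \le p_0^{M-L}\,(p_i^L q_i(x))$; summing over $i$ with $q_i(x) \ge 0$ gives the claim.

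With every hypothesis in place, Lemma~\ref{lem:cvarIneq} applied to $(X, \tildeX_L, \tildeX_M)$ with $(\alpha_1,\alpha_2) = (\alpha_L,\alpha_M)$ yields exactly $\cvar_{\alpha_L}[\tildeX_L] \le \cvar_{\alpha_M}[\tildeX_M] \le \EE[X] \le \bcvar_{\alpha_M}[\tildeX_M] \le \bcvar_{\alpha_L}[\tildeX_L]$. I expect the comparison inequality to be the main obstacle; once it is reduced to the term-by-term domination $p_i^{M-L} \le p_0^{M-L}$ it is routine. The only other points needing care are that the $q_i$ arising from Pauli conjugation are genuinely nonnegative (so that the term-by-term bound survives summation) and that the three variables share one value map, which is precisely what allows the lemma to compare their $\cvar$'s.
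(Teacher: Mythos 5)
Your proposal is correct and follows essentially the same route as the paper's proof: you instantiate Lemma \ref{lem:cvarIneq} with $X_1 = \tildeX_L$, $X_2 = \tildeX_M$, $C_1 = 1/\alpha_L$, $C_2 = 1/\alpha_M$, obtain the lower bounds by discarding the nonnegative $i\geq 1$ terms, and prove the comparison $P_{\tildeX_M}(x) \leq (\alpha_M/\alpha_L)P_{\tildeX_L}(x)$ via the same term-by-term estimate $p_i^{M-L} \leq p_0^{M-L}$ that the paper uses. The only (welcome) addition is that you explicitly check the lemma's hypothesis $C_1 \geq C_2 \geq 1$ by noting $C_k = \sum_i (p_i/p_0)^k$ is nonincreasing in $k$, a point the paper leaves implicit.
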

\begin{proof} 
A proof is provided in Appendix \ref{proof:thm:vcp_comp}.
\end{proof}


As a next result, we consider a multi-layer generalization of the VCP where the circuit consists of $k$ noisy Unitaries $\calU_{\calE_i} := \calE_i \calU_i$ that are applied on the input state $\rho_0$, where $\calU_i$ is a noiseless Unitary, and \[\calE_i := p_{i_0}\calI + \sum_{j=1}^{4^N-1}p_{i_j}\calE_{i_j}.\] The Unitary $\calU_{\calE_i}$ is purified using an order-$l_i$ VCP that implements $\calU_{\calE^{(l_i)}_i} := \calE^{(l_i)}_i  \calU_i$, and hence the vector $\vec{l} := \{l_1,l_2,.. l_k\}$ characterizes the VCP purification order for each of the noisy Unitaries.

\begin{mytheorem}\label{thm:vcp_comp_genralized}
    For $\rho_0 \in \calD(A)$, let $\rho = \calU_1\circ\calU_{2}\cdots \circ \calU_k(\rho_0)$. Define the noisy version of $\rho$ that is purified using an order-$\vec{l}$ VCP as  $\tilde{\rho}^{(\vec{l})} := \calU_{\calE^{(l_1)}}\circ\calU_{\calE^{(l_2)}}\cdots \circ \calU_{\calE^{(l_k)}}(\rho)$. Similarly, define $\tilde{\rho}^{(\vec{m})} := \calU_{\calE^{(m_1)}}\circ\calU_{\calE^{m_2)}}\cdots \circ \calU_{\calE^{(m_k)}}(\rho)$ 
    as an order-$\vec{m}$ VCP of $\rho$, for $\vec{m}\neq \vec{l}$ .
    Further, let $X, \tilde{X}_{\vec{l}}$,  and $\tilde{X}_{\vec{m}}$ denote the  random variables
    corresponding to the measurement of $\rho , \tilde\rho_{\vec{l}},$ and $\tilde\rho_{\vec{m}}$, with respect to $O$. If $\vec{l}$ and $\vec{m}$ are such that 
    \begin{align}
      \Pi_{i=1}^k \round{\frac{p_{i_{j_i}}}{p_{i_0}}}^{m_i-l_i} \leq 1, \label{eq:m_l_condition}
    \end{align}
    for all $j_i\in [1,4^N-1]$ and  $i\in [1,k]$, then
    we have the following relation:
    \begin{align}
         \cvar_{\alpha_{\vec{l}}}[\tildeX_{\vec{l}}] &\leq \cvar_{\alpha_{\vec{m}}}[\tildeX_{\vec{m}}] 
         \leq \EE[X] \leq \bcvar_{\alpha_{\vec{m}}}[\tildeX_{\vec{m}}]\leq 
          \bcvar_{\alpha_{\vec{l}}}[\tildeX_{\vec{l}}], \label{eq:m_l_cvar}
     \end{align}    
 where $$\alpha_{\vec{l}} := \frac{\prod_{i=1}^k p_{i_0}^{l_i} }{\prod_{i=1}^k \round{\sum_{j=0}^{4^N-1}p_{i_j}^{l_i}}}.$$
\end{mytheorem}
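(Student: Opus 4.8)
The plan is to deduce the entire chain \eqref{eq:m_l_cvar} from Lemma \ref{lem:cvarIneq} by producing, for the three measurement distributions, the pointwise density comparisons that its hypotheses demand. Because $O$ is diagonal in the computational basis, the random variables $X$, $\tildeX_{\vec{l}}$, $\tildeX_{\vec{m}}$ are governed by the diagonal entries $P_X(x)=\langle x|\rho|x\rangle$, $P_{\tildeX_{\vec{l}}}(x)=\langle x|\tilde\rho^{(\vec{l})}|x\rangle$, and analogously for $\vec{m}$; so the whole problem reduces to comparing diagonal matrix elements of the purified states, and I would work throughout at the level of these nonnegative numbers.

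First I would expand the purified operators into a \emph{path} form. Writing each $\calE_i^{(l_i)} = \frac{1}{Z_i}\round{p_{i_0}^{l_i}\calI + \sum_{j\ge 1}p_{i_j}^{l_i}\calE_{i_j}}$ with $Z_i := \sum_{j=0}^{4^N-1}p_{i_j}^{l_i}$ and composing the $k$ layers gives
\[
\tilde\rho^{(\vec{l})} = \frac{1}{\prod_i Z_i}\sum_{\vec{s}}\round{\prod_{i=1}^k p_{i_{s_i}}^{l_i}}\, R_{\vec{s}}, \qquad R_{\vec{s}} := \calE_{1_{s_1}}\calU_1\cdots\calE_{k_{s_k}}\calU_k(\rho_0),
\]
where $\vec{s}=(s_1,\dots,s_k)$ runs over $\{0,1,\dots,4^N-1\}^k$, the index $s_i=0$ selects the identity branch $\calE_{i_0}=\calI$, and the all-identity path yields $R_{\vec{0}}=\rho$. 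The decisive observation is that $R_{\vec{s}}$, hence $r_{\vec{s}}(x):=\langle x|R_{\vec{s}}|x\rangle\ge 0$, is \emph{independent of the purification orders}; the same expansion holds for $\vec{m}$ with $l_i$ replaced by $m_i$ and normalizer $\prod_i Z_i'$, $Z_i':=\sum_j p_{i_j}^{m_i}$.

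With this expansion the two lower-bound hypotheses of Lemma \ref{lem:cvarIneq} are immediate: keeping only the $\vec{0}$ path and using $r_{\vec{s}}(x)\ge0$ gives $P_{\tildeX_{\vec{l}}}(x)\ge \alpha_{\vec{l}}P_X(x)$ and $P_{\tildeX_{\vec{m}}}(x)\ge \alpha_{\vec{m}}P_X(x)$, i.e.\ $P_X/C_1\le P_{\tildeX_{\vec{l}}}$ and $P_X/C_2\le P_{\tildeX_{\vec{m}}}$ with $C_1:=1/\alpha_{\vec{l}}$, $C_2:=1/\alpha_{\vec{m}}$. The load-bearing hypothesis is the cross inequality $P_{\tildeX_{\vec{m}}}(x)\le \tfrac{C_1}{C_2}P_{\tildeX_{\vec{l}}}(x)=\tfrac{\alpha_{\vec{m}}}{\alpha_{\vec{l}}}P_{\tildeX_{\vec{l}}}(x)$. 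Clearing the common normalization, this is equivalent to
\[
\sum_{\vec{s}}\round{\prod_i p_{i_0}^{l_i}p_{i_{s_i}}^{m_i}}\, r_{\vec{s}}(x)\;\le\;\sum_{\vec{s}}\round{\prod_i p_{i_0}^{m_i}p_{i_{s_i}}^{l_i}}\, r_{\vec{s}}(x),
\]
and since $r_{\vec{s}}(x)\ge 0$ it suffices to dominate coefficient-by-coefficient, which rearranges into exactly $\prod_i\round{p_{i_{s_i}}/p_{i_0}}^{m_i-l_i}\le 1$, namely hypothesis \eqref{eq:m_l_condition}. Summing the cross inequality over $x$ and using that both distributions are normalized then forces $\alpha_{\vec{l}}\le\alpha_{\vec{m}}$, giving $C_1\ge C_2\ge 1$ (the last bound because each factor of $\alpha_{\vec{m}}$ is at most one), so all constants are admissible and Lemma \ref{lem:cvarIneq} with $\alpha_1=\alpha_{\vec{l}}$, $\alpha_2=\alpha_{\vec{m}}$ delivers \eqref{eq:m_l_cvar}.

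I expect the cross inequality to be the main obstacle, and specifically the bookkeeping of which paths the hypothesis must control. The coefficient-by-coefficient reduction needs $\prod_i\round{p_{i_{s_i}}/p_{i_0}}^{m_i-l_i}\le 1$ for \emph{every} path $\vec{s}$, including mixed paths in which some layers sit on the identity branch ($s_i=0$, contributing a trivial factor $1$) while others are active; condition \eqref{eq:m_l_condition} must therefore be read as covering all such paths. This is precisely where the dominance of the identity component $p_{i_0}\ge p_{i_j}$ and the assumed order relation between $\vec{l}$ and $\vec{m}$ enter, and I would take care that no active sub-collection of layers produces a factor exceeding one before invoking the lemma.
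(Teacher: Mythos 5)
Your proposal is correct under the reading of hypothesis \eqref{eq:m_l_condition} that you adopt, and it is essentially the paper's own route: the paper omits this proof, deferring to the argument of Theorem \ref{thm:vcp_comp}, and your path expansion --- identity-path lower bounds $P_{\tildeX_{\vec{l}}}\geq\alpha_{\vec{l}}P_X$ and $P_{\tildeX_{\vec{m}}}\geq\alpha_{\vec{m}}P_X$, a coefficient-wise cross bound $P_{\tildeX_{\vec{m}}}\leq(\alpha_{\vec{m}}/\alpha_{\vec{l}})P_{\tildeX_{\vec{l}}}$, then Lemma \ref{lem:cvarIneq} --- is exactly that argument generalized to $k$ layers.

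The caveat you flag in your last paragraph, however, is not a formality: the strengthened reading of \eqref{eq:m_l_condition} (allowing $j_i=0$, i.e., requiring the product over every subset of ``active'' layers to be at most one) is \emph{necessary}, not merely convenient, and the theorem as literally printed can fail without it. Concretely, take $k=2$, $N=1$, $\calU_1=\calU_2=\calI$, $\vec{l}=(1,2)$, $\vec{m}=(2,1)$, with layer $1$ having $p_{1_0}=0.97$, $p_{1_j}=0.01$ for $j\neq 0$, and layer $2$ having $p_{2_0}=0.4$, $p_{2_j}=0.2$ for $j\neq 0$. Then for all nonzero $j_1,j_2$,
\begin{align}
\round{\frac{p_{1_{j_1}}}{p_{1_0}}}^{m_1-l_1}\round{\frac{p_{2_{j_2}}}{p_{2_0}}}^{m_2-l_2}
=\frac{0.01/0.97}{0.2/0.4}\approx 0.021\leq 1,\nonumber
\end{align}
so \eqref{eq:m_l_condition} holds as stated; but the mixed path $(s_1,s_2)=(0,j_2)$ carries the factor $\round{p_{2_{j_2}}/p_{2_0}}^{-1}=2>1$. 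Working out the composites, the $\vec{m}$-purified channel is strictly noisier than the $\vec{l}$-purified one (identity weights $\approx 0.40$ versus $\approx 0.56$), one finds $\alpha_{\vec{l}}\approx 0.554>\alpha_{\vec{m}}\approx 0.400$ (so the requirement $C_1\geq C_2$ of Lemma \ref{lem:cvarIneq} fails, and your normalization-summing argument correctly detects this, since the cross inequality itself fails pointwise), and with $\rho_0=\ketbra{0}$ and $O=Z$ a direct computation gives $\cvar_{\alpha_{\vec{l}}}[\tildeX_{\vec{l}}]\approx -0.06$ while $\cvar_{\alpha_{\vec{m}}}[\tildeX_{\vec{m}}]=-1$, contradicting the first inequality of \eqref{eq:m_l_cvar}. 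So your proof stands precisely because you demand coefficient domination over \emph{all} paths, including mixed ones; the hypothesis of the theorem should be amended to quantify over $j_i\in\{0\}\cup[1,4^N-1]$. In the regime $m_i\geq l_i$ for all $i$, which is the case singled out in the paper's remark, every factor is at most one, the two readings coincide, and no issue arises.
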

\begin{proof} 
The proof follows from similar arguments as the proof of Theorem \ref{thm:vcp_comp}, so is omitted for brevity.
\end{proof}
\begin{myremark}
    If $\vec{l}$ and $\vec{m}$ are such that $m_i \geq l_i$ for all $i \in [1,k]$, then, we can assert that the inequality \eqref{eq:m_l_condition} holds for any $\calE$ whose leading component is identity. As a result, we have  inequality \eqref{eq:m_l_cvar} satisfied. The above theorem also yeilds Theorem \ref{thm:vcp_comp} as a special case. It is stated later for pedagogical ease.
\end{myremark}

\begin{figure}
    \centering
    \includegraphics[width=\linewidth]{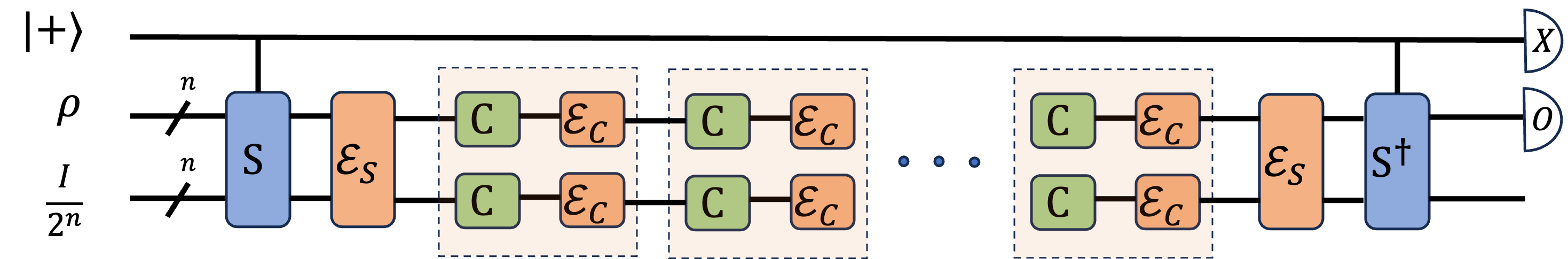}
    \caption{VCP circuit of order-$2$ in presence of Noisy Swap gates with their noise modeled by $\calE_S$. The objective is to purify the Clifford gates with each having noise $\calE_C$.}
    \label{fig:noisy_swap}
\end{figure}

\subsection{Application to Noisy Virtual Channel Purification}
 We now provide results pertaining to circuits that contain noisy swap gates, as shown in Figure \ref{fig:noisy_swap}. 
 To make the analysis tractable, we assume that each noisy Unitary $U_\calE$ that we wish to purify is a Clifford Unitary $C$ followed by a Pauli channel $\calE$; $U_\calE = C \circ \calE$. With the Clifford assumption, we can move the clifford gate before the CSwap noise, without affecting the nature of the noise, i.e., the noise remains Pauli-noise, as shown in Figure \ref{fig:grouped_clifford}. 
 More precisely, for an arbitrary $\rho \in \calD(A)$, and a Pauli-noise channel $\calP$
\begin{align}
    C\calP(\sigma)C^\dagger = \calP'(C\sigma C^\dagger), 
\end{align}
where 
\begin{align}
    \calP' = p_0 \calI +  \sum_{i=1}^{4^N-1} p_i \calE_{\pi(i)}= p_0 \calI + \sum_{i=1}^{4^N-1} p_{\pi^{-1}(i)} \calE_i,
\end{align}
for some a permutation operator $\pi$ on $P_n\backslash \{I^{\otimes n}\}$.

Further, we also note that the composition of two Pauli-noise channels gives another Pauli-noise channel. 
\begin{figure}[!htb]
    \centering
    \includegraphics[width=0.8\linewidth]{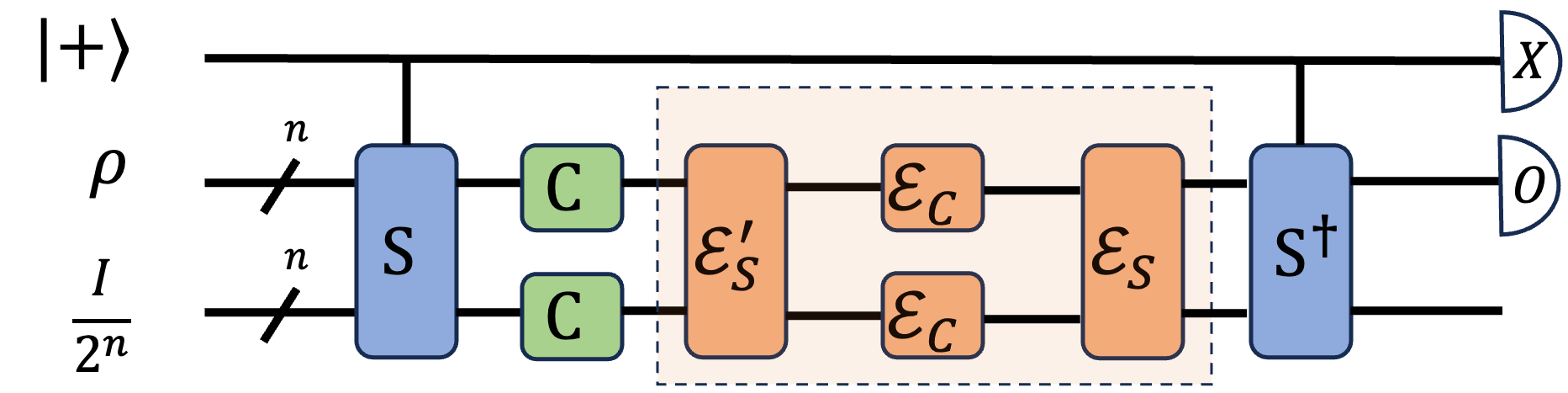}
    \caption{The result of conjugating the Pauli-noise of the CSwap gate with the Clifford C.}
    \label{fig:grouped_clifford}
\end{figure}
Now we compute the expectations required in the VCP protocol when the Control-Swap (CSwap) gates have pauli noise, and provide a condition that ensures the VCP can overcome the CSwap noise and still purify the Unitary (Clifford) gate. 

\begin{mytheorem}\label{thm:noisy_vcp_ratio_obs}
    For the given input state $\rho$ on $N$ qubits, and for the schematic in Figure \ref{fig:grouped_clifford} that implements an order-$2$ VCP of the Clifford $C$ while accounting for the CSwap and Circuit noise, we obtain the following:
    \begin{align}
        \frac{\langle X\otimes O\rangle}{\langle X\otimes I_{2^N}\rangle} = \Tr{O \calE_{\mu}(\rho)}, 
    \end{align}
    where
    \begin{align}
        \calE_{\mu} &:= \mu_0\calI + \sum_{i=1}^{4^N-1}\mu_i\calE_i,\nonumber\\
        \mu_k &:= \frac{\sum_{a,b,c,d \in \{0,1\}^{4^{N}}} p_{ab}p_{cd}q_{a\oplus c \oplus k}q_{b\oplus d \oplus k}}{\sum_{a,b,c,d,k' \in \{0,1\}^{4^{N}} } p_{ab}p_{cd}q_{a\oplus c \oplus k'}q_{b\oplus d \oplus k'}}.\label{eq:mu_def}
    \end{align}
    and $\{p_{i,j}\}$ and $\{q_k\}$ denote the probabilities in a Pauli channel decomposition for the the CSwap network implemented on $2N$ qubits and the circuit noise on $N$ qubits, respectively.
\end{mytheorem}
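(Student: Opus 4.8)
The plan is to compute the two expectation values $\langle X\otimes O\rangle$ and $\langle X\otimes I_{2^N}\rangle$ directly from the circuit of Figure~\ref{fig:grouped_clifford} and then take their ratio. First I would write the joint state of the ancilla and the two $N$-qubit copies: the ancilla is prepared in $|+\rangle$, each copy carries the conjugated circuit Pauli noise $\calE_q(\rho)=\sum_k q_k E_k\rho E_k$ obtained after commuting the Clifford past the CSwap noise via the displayed identity $C\calP(\sigma)C^\dagger=\calP'(C\sigma C^\dagger)$, and the controlled-SWAP is dressed by the CSwap Pauli noise on the $2N$-qubit register, inserted so that two independent noise labels $(a,b)$ and $(c,d)$ with weights $p_{ab}$ and $p_{cd}$ appear (this two-layer placement should be read off from Figure~\ref{fig:noisy_swap}). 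Measuring $X$ on the ancilla projects onto the off-diagonal coherence of the controlled-SWAP, reducing both expectations to real parts of single traces of the form $\Tr[(O\otimes I)\,\mathrm{SWAP}\,(\text{noise-dressed two-copy state})]$, with $O$ replaced by $I_{2^N}$ for the denominator.

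Next I would expand every Pauli channel in the symplectic (binary-label) representation, in which an $N$-qubit Pauli $E_x$ is indexed by a bit-vector and products satisfy $E_x E_y=\gamma\,E_{x\oplus y}$ for a phase $\gamma$ of unit modulus. This turns the trace into a multiple sum over the two copy labels (two factors of $q$) and the two CSwap-noise labels (two factors of $p$). The crucial manipulation is to fold the two-copy trace using $\Tr[\mathrm{SWAP}\,(A\otimes B)]=\Tr[AB]$ and $\mathrm{SWAP}\,(A\otimes B)\,\mathrm{SWAP}=B\otimes A$; pushing the SWAP through collapses the tensor product and forces the Pauli operators acting on each register to multiply. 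The effective single-register operator dressing $\rho$ on copy one then carries label $a\oplus c\oplus k$ and on copy two $b\oplus d\oplus k$, where $k$ is the common label of the surviving channel term; re-indexing the copy-noise sums accordingly produces exactly the arguments $q_{a\oplus c\oplus k}$ and $q_{b\oplus d\oplus k}$.

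I would then argue that all accumulated phases cancel: because the ket-side and bra-side of the $X$-coherence carry conjugate phases, each surviving term appears as $E_m\rho E_m$ with $\gamma\bar\gamma=1$, so the coefficients are real and nonnegative, and since $O$ is diagonal in the computational basis no additional cross terms survive. Consequently the numerator collapses to $\sum_k\big(\sum_{a,b,c,d}p_{ab}p_{cd}\,q_{a\oplus c\oplus k}q_{b\oplus d\oplus k}\big)\Tr[O\,E_k\rho E_k]$. Setting $O=I_{2^N}$ and using $\Tr[E_k\rho E_k]=1$ shows the denominator equals the full sum $\sum_{k'}$ of the same weights, which is precisely the normalization in \eqref{eq:mu_def}. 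Dividing numerator by denominator yields $\sum_k\mu_k\Tr[O\,E_k\rho E_k]=\Tr[O\,\calE_\mu(\rho)]$, as claimed.

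The step I expect to be the main obstacle is the Pauli bookkeeping through the SWAP fold in the second paragraph: tracking which of the four noise labels and two copy labels land on which register, and verifying that the phases from the successive Pauli products combine trivially so that $\calE_\mu$ emerges as a genuine Pauli channel with the stated XOR weights. A secondary point requiring care is justifying the two-layer placement of the CSwap noise that produces the two independent factors $p_{ab}p_{cd}$ and confirming that the $X$-measurement isolates exactly the coherence term used above.
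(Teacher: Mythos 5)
There is a genuine gap, and it sits exactly where you predicted trouble: the mechanism that produces the single common label $k$. In the VCP circuit analyzed by the paper, the second $N$-qubit register is \emph{not} another noisy copy of $\rho$; it is prepared in the maximally mixed state $I/2^N$, and this is essential. After the swap is pushed through (via $S(A\otimes B)S^\dagger = B\otimes A$), the two-register trace factorizes as $\Tr{O\, E_{c\oplus i\oplus a}\,\rho\, E^\dagger_{d\oplus j\oplus b}}\cdot\frac{1}{2^N}\Tr{E_{d\oplus j\oplus b}E^\dagger_{c\oplus i\oplus a}}$, and it is the Pauli orthogonality relation $\Tr{E_x E_y^\dagger}=2^N\delta_{x,y}$, applied to the trace over the maximally mixed register, that yields the Kronecker delta $\delta_{d\oplus j\oplus b,\; c\oplus i\oplus a}$. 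Only this delta creates the ``common label $k$'' you invoke; re-indexing $i=a\oplus c\oplus k$ and $j=b\oplus d\oplus k$ then produces the weights $q_{a\oplus c\oplus k}\,q_{b\oplus d\oplus k}$. Your write-up instead asserts that the common label emerges from Pauli bookkeeping through the SWAP fold, and attributes the disappearance of cross terms to $O$ being diagonal; neither is correct. The diagonality of $O$ plays no role in this theorem, and no amount of bookkeeping produces the delta without the maximally mixed input on the second register.

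Moreover, your stated setup (``each copy carries $\calE_q(\rho)$'') would make the claim false if followed literally: with both registers holding noisy copies of $\rho$, the swap-trace identity $\Tr{S(A\otimes B)}=\Tr{AB}$ gives virtual state \emph{distillation}, i.e., quantities of the form $\Tr{O\,P\tilde\rho P'\tilde\rho P''}$ that are quadratic in $\rho$, not the Pauli-channel expression $\sum_k \mu_k'\Tr{O E_k\rho E_k^\dagger}$, linear in $\rho$, that the theorem claims. The rest of your outline does match the paper's argument: isolating the $|0\rangle\langle 1|$ and $|1\rangle\langle 0|$ coherences via the $X$ measurement, the two layers of CSwap noise with labels $(a,b)$ and $(c,d)$, the cancellation of the Pauli phases between ket and bra sides (which the paper leaves implicit, so your attention to it is welcome), and obtaining the denominator by setting $O=I_{2^N}$. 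Replace the second register's state with $I/2^N$ and the hand-waved ``surviving channel term'' step with the orthogonality computation above, and your proof becomes the paper's proof.
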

\begin{proof}
    A proof is provided in Appendix \ref{proof:thm:noisy_vcp_ratio_obs}.
\end{proof}

A generalization of the above theorem to an order-$M$ VCP yields the following. 
\begin{mytheorem}\label{thm:noisy_vcp_ratio_obs_orderM}
    For the given input state $\rho$ on $N$ qubits, and for a circuit that implements an order-$M$ VCP of the Clifford $C$ and accounts for the CSwap and Circuit noise, we obtain the following:
    \begin{align}
        \frac{\langle X\otimes O\rangle}{\langle X\otimes I_{2^N}\rangle} = \Tr{O \calE^{(M)}_{\mu}(\rho)}, 
    \end{align}
    where
    \begin{align}
        \calE^{(M)}_{\mu} &:= \mu_0^{(M)}\calI + \sum_{i=1}^{4^N-1}\mu_i^{(M)}\calE_i \nonumber \\
    \mu_k^{(M)} &:= \frac{\tau_k^{(M)}}{\sum_{k \in\{0,1\}^{4^{N}}\tau_k^{(M)}}} \nonumber \\
    \tau_k^{(M)} &:= \hspace{-0.2in}\sum_{\substack{a_1,\cdots a_M \in \{0,1\}^{4^{N}}  \\ b_1,\cdots b_M \in \{0,1\}^{4^{N}} } } \hspace{-0.2in}p_{a_1,\cdots a_M}p_{b_1,\cdots b_M} \prod_{i=1}^M q_{a_i\oplus b_i \oplus k},
    \end{align}
    and $\{p_{a_1,\cdots,a_M}\}$ and $\{q_k\}$ denote the probabilities in a Pauli channel decomposition for the CSwap network over $MN$ qubits and the circuit noise on $N$ qubits, respectively.
\end{mytheorem}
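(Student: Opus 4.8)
The plan is to reproduce the computation behind Theorem \ref{thm:noisy_vcp_ratio_obs}, but carried out for the $M$-copy derangement circuit rather than the two-copy case. The starting point is to express the unnormalized flag expectation $\langle X\otimes O\rangle$ as a single trace over the full $MN$-qubit register together with the flag ancilla. After commuting the Cliffords through the CSwap noise as in Figure \ref{fig:grouped_clifford}, so that only Pauli channels remain, the ingredients are: the $M$ input copies $\rho^{\otimes M}$, each acted on by the circuit Pauli channel with coefficients $\{q_k\}$; the flag-controlled cyclic shift $S^{(M)}$ on the $MN$ qubits, implemented by the noisy CSwap network carrying its own Pauli channel with coefficients $\{p_{a_1,\dots,a_M}\}$; and finally the observable $O$ measured on a single copy while the flag is read out in the $X$ basis. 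Writing $\langle X\otimes I_{2^N}\rangle$ is the same trace with $O$ replaced by $I_{2^N}$.

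First I would expand both noise channels in their Pauli decompositions, turning the trace into a sum over the two swap labels $(a_1,\dots,a_M)$ and $(b_1,\dots,b_M)$ and the $M$ per-copy circuit labels. The appearance of two independent $MN$-qubit swap labels, rather than one, is the exact analogue of the pair $(a,b),(c,d)$ in the order-$2$ computation: it stems from the bra/ket structure of the off-diagonal flag coherence $|0\rangle\langle 1|$ selected by the $X$ read-out, in which the swap-network Pauli on each side is summed independently. At this stage the surviving object is a trace of a product of Pauli operators, weighted by $p_{a_1,\dots,a_M}\,p_{b_1,\dots,b_M}$ and by the $M$ circuit coefficients.

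Next I would evaluate the residual trace using the cyclic structure of $S^{(M)}$ together with Pauli orthogonality, $\Tr{E_i^\dagger E_j}=2^N\delta_{ij}$. Because the shift cyclically identifies the $M$ copies, tracing over the copy registers forces the Pauli labels to match around the cycle, which in the symplectic (XOR) representation couples the swap and circuit labels through the constraint $a_i\oplus b_i\oplus(\text{circuit label on copy }i)=k$ for a single shared $k$, where $\calE_k$ (with $\calE_0=\calI$) is the one effective Pauli that remains acting on $\rho$. Summing the circuit coefficients subject to these constraints produces exactly $\prod_{i=1}^M q_{a_i\oplus b_i\oplus k}$, so that $\langle X\otimes O\rangle = Z^{-1}\sum_k \tau_k^{(M)}\,\Tr{O\calE_k(\rho)}$ for a $k$-independent constant $Z$, with $\tau_k^{(M)}$ as defined in the statement.

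The last step is to normalize: setting $O=I_{2^N}$ and using that each $\calE_k$ is trace preserving, so $\Tr{\calE_k(\rho)}=1$, gives $\langle X\otimes I_{2^N}\rangle = Z^{-1}\sum_k \tau_k^{(M)}$. Dividing cancels $Z$ and converts the $\tau_k^{(M)}$ into the normalized weights $\mu_k^{(M)}$, yielding $\langle X\otimes O\rangle/\langle X\otimes I_{2^N}\rangle=\sum_k \mu_k^{(M)}\Tr{O\calE_k(\rho)}=\Tr{O\calE^{(M)}_\mu(\rho)}$, as claimed. I expect the main obstacle to be the bookkeeping in the third step: verifying that the $M$-fold cyclic permutation, acting on tensor products of Paulis across the copies and combined with the flag read-out, yields precisely the coupled constraints $a_i\oplus b_i\oplus k$ sharing one common $k$, rather than $M$ independent labels. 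This is where the order-$M$ geometry genuinely departs from the order-$2$ case, and it is most cleanly handled either by a direct combinatorial argument on the shift operator that parallels the proof of Theorem \ref{thm:noisy_vcp_ratio_obs} or by induction on $M$, peeling off one transposition of the cycle at a time.
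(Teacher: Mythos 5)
Your high-level template --- Pauli-expand all noise channels, isolate the flag coherence picked out by the $X$ readout, collapse the sum via Pauli orthogonality, then normalize with $O=I_{2^N}$ --- is the same one the paper uses for Theorem \ref{thm:noisy_vcp_ratio_obs} and implicitly invokes for the order-$M$ case. But two structural claims in your sketch are incorrect, and the computation as you describe it would not produce the stated $\tau_k^{(M)}$. First, the circuit input is not $\rho^{\otimes M}$ but $\rho\otimes(I/2^N)^{\otimes(M-1)}$; the paper's order-$2$ proof starts from $\rho\otimes\frac{I}{2^n}$. The maximally mixed auxiliaries are exactly what justify the Pauli-orthogonality step you invoke: tracing them out yields factors $\frac{1}{2^N}\Tr{E_u E_v^\dagger}=\delta_{u,v}$, which is where the single shared label $k$ and the constraints $a_i\oplus b_i\oplus(\text{circuit label on copy }i)=k$ come from. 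With $\rho$ in every register those traces become $\Tr{E_u\rho E_v^\dagger}$-type terms that do not vanish for $u\neq v$, and the delta structure underlying $\tau_k^{(M)}$ disappears.

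Second, the two independent strings $(a_1,\dots,a_M)$ and $(b_1,\dots,b_M)$ do not arise from ``independent summation of the swap-network Pauli on the bra and ket sides'' of a single noisy shift. A Pauli channel acts on the flag coherence as $\sum_a p_a E_a(\cdot)E_a^\dagger$, with the same Kraus index on both sides; summing the two sides independently would replace the channel by the non-physical map $\bigl(\sum_a p_a E_a\bigr)(\cdot)\bigl(\sum_b p_b E_b\bigr)^\dagger$. The two strings are present because the VCP circuit applies the noisy controlled-permutation network \emph{twice}, once before and once after the noisy Cliffords: in the paper's order-$2$ proof, $p_{ab}$ and $p_{cd}$ are the noise labels of the first and the second CSwap network respectively, each appearing with correlated bra/ket indices. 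Your final formula agrees with the theorem only because the actual circuit does contain two such layers; under your one-layer picture the correct computation would give a single correlated $p$-string and a different (wrong) answer. Once these two points are corrected --- maximally mixed auxiliaries and two noisy permutation layers --- the rest of your bookkeeping (the shared $k$, the product $\prod_{i=1}^M q_{a_i\oplus b_i\oplus k}$, and the cancellation of the normalization) goes through and reproduces the paper's argument.
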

\begin{proof}
   The proof follows from similar arguments as the proof of Theorem \ref{thm:noisy_vcp_ratio_obs} so is omitted for the sake of brevity.
\end{proof}

Application of the above theorems to $\cvar$ gives the following theorem.

\begin{mytheorem}\label{thm:vcp_noisy}
Let $\bar{X}_{vcp}$ denote the random variable corresponding to the measurement of the output of a VCP circuit that implements a noisy CSwap Unitary, with noise modeled by $\calE_S$. 
    For the given $\calE_S$ and $\calE_C$, if the resulting $\calE_{\mu}$ is such that $\mu_0 \geq q_0$ and $\frac{\mu_i}{\mu_0} \leq \frac{q_i}{q_0}$ for all $i \in [1,4^N-1]$, we have 
     \begin{align}
         \cvar_{q_0}[{X}_{noisy}] &\leq \cvar_{\mu_0}[\bar{X}_{vcp}] 
         \leq \EE[X] \leq \bcvar_{\mu_0}[\bar{X}_{vcp}] \leq 
          \bcvar_{q_0}[{X}_{noisy}] ,
     \end{align}    
     where $X$ and ${X}_{noisy}$ are as defined earlier.
\end{mytheorem}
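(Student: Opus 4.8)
The plan is to reduce Theorem~\ref{thm:vcp_noisy} to the already-established Lemma~\ref{lem:cvarIneq} by identifying the three random variables $X$, $X_1$, $X_2$ and the constants $C_1 \geq C_2 \geq 1$ appropriately. Here $X$ is the measurement outcome of the ideal state $\rho = \calU(\rho_0)$, the variable $X_1 = {X}_{noisy}$ corresponds to the raw noisy measurement (governed by the circuit noise $\calE_C$ with identity weight $q_0$), and $X_2 = \bar{X}_{vcp}$ corresponds to the output of the noisy-swap VCP circuit, whose effective channel is $\calE_\mu$ from Theorem~\ref{thm:noisy_vcp_ratio_obs}. The natural choice is $\alpha_1 = q_0$ so $C_1 = 1/q_0$, and $\alpha_2 = \mu_0$ so $C_2 = 1/\mu_0$; the hypothesis $\mu_0 \geq q_0$ then gives exactly $C_1 \geq C_2 \geq 1$ (noting both weights are at most $1$ as they are identity components of probability decompositions).

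First I would make precise the claim that the measurement distributions $P_{X_1}$ and $P_{X_2}$ dominate $P_X$ in the pointwise sense required by Lemma~\ref{lem:cvarIneq}. Since $O$ is diagonal in the computational basis and $\calE_C$, $\calE_\mu$ are Pauli channels, each measured distribution is a convex mixture in which the identity component reproduces the ideal distribution $P_X$ with weight $q_0$ (resp.\ $\mu_0$), while the remaining Pauli terms contribute a nonnegative residual. Concretely, for a computational-basis outcome $x$ one writes $P_{X_{noisy}}(x) = q_0 P_X(x) + \sum_{i\geq 1} q_i P_{X}^{(i)}(x) \geq q_0 P_X(x)$, and likewise $P_{\bar{X}_{vcp}}(x) \geq \mu_0 P_X(x)$, since each $\calE_i$ applied before a diagonal measurement just permutes probability mass nonnegatively. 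This establishes the two lower-bound inequalities $P_X(x)/C_1 \leq P_{X_1}(x)$ and $P_X(x)/C_2 \leq P_{X_2}(x)$ of the lemma.

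The main obstacle is verifying the remaining cross inequality $P_{X_2}(x) \leq (C_1/C_2) P_{X_1}(x)$, i.e.\ $P_{\bar{X}_{vcp}}(x) \leq (\mu_0/q_0)\, P_{X_{noisy}}(x)$ for all $x$. This is precisely where the hypothesis $\mu_i/\mu_0 \leq q_i/q_0$ enters: I would rewrite $P_{\bar{X}_{vcp}}(x) = \mu_0\big(P_X(x) + \sum_{i\geq 1}(\mu_i/\mu_0) P_X^{(i)}(x)\big)$ and $P_{X_{noisy}}(x) = q_0\big(P_X(x) + \sum_{i\geq 1}(q_i/q_0) P_X^{(i)}(x)\big)$, where $P_X^{(i)}$ denotes the ideal distribution transformed by $\calE_i$. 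Dividing, the desired bound $P_{\bar{X}_{vcp}}(x)/\mu_0 \leq P_{X_{noisy}}(x)/q_0$ reduces to the termwise comparison $\sum_{i\geq 1}(\mu_i/\mu_0) P_X^{(i)}(x) \leq \sum_{i\geq 1}(q_i/q_0) P_X^{(i)}(x)$, which follows from $\mu_i/\mu_0 \leq q_i/q_0$ together with the nonnegativity of each $P_X^{(i)}(x)$. The subtlety to handle carefully is that the two channels $\calE_\mu$ and $\calE_C$ must act on the same basis states with the same Pauli labels $\calE_i$, so I would confirm that the Clifford conjugation relabeling $\pi$ discussed before Figure~\ref{fig:grouped_clifford} has been absorbed consistently so that the index $i$ refers to matched Pauli components in both decompositions.

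Once these three pointwise inequalities are in hand, the conclusion is immediate: applying Lemma~\ref{lem:cvarIneq} with the identifications above yields
\begin{align}
\cvar_{q_0}[{X}_{noisy}] \leq \cvar_{\mu_0}[\bar{X}_{vcp}] \leq \EE[X] \leq \bcvar_{\mu_0}[\bar{X}_{vcp}] \leq \bcvar_{q_0}[{X}_{noisy}],
\end{align}
which is exactly the claimed chain of inequalities. Thus the entire proof is structural: the real content is the translation of the channel-level hypotheses $\mu_0 \geq q_0$ and $\mu_i/\mu_0 \leq q_i/q_0$ into the distribution-domination hypotheses of the CVaR comparison lemma, and I expect that translation step to be the only place requiring genuine care.
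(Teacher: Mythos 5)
Your proposal is correct and follows essentially the same route as the paper's own proof: identify $X_1 = X_{noisy}$, $X_2 = \bar{X}_{vcp}$ with $C_1 = 1/q_0$, $C_2 = 1/\mu_0$, establish the pointwise lower bounds $P_{X_{noisy}}(x) \geq q_0 P_X(x)$ and $P_{\bar{X}_{vcp}}(x) \geq \mu_0 P_X(x)$ from the Pauli-channel decompositions, derive the cross bound $P_{\bar{X}_{vcp}}(x) \leq (\mu_0/q_0) P_{X_{noisy}}(x)$ by the same termwise comparison using $\mu_i/\mu_0 \leq q_i/q_0$, and invoke Lemma \ref{lem:cvarIneq}. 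The only difference is cosmetic (you work with the distributions $P_X^{(i)}$ directly while the paper manipulates matrix elements $\bra{z}E_i\rho E_i^\dagger\ket{z}$), and your added caution about matching Pauli labels under the Clifford conjugation is a reasonable detail the paper leaves implicit.
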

\begin{proof}
    A proof is provided in Appendix \ref{proof:thm:vcp_noisy}.
\end{proof}
The generalization of the above theorem to arbitrary orders follows directly and is straightforward to establish.


\section{Examples}\label{sec:examples}
In this section, we discussion the practical implications of our findings with numerical examples.
\subsection{Depolarizing noise}\label{sec:example1} 
In this example, we intend to purify a Clifford Unitary $C$ that has the error $\calE_C$ modeled by a depolarizing noise of parameter $q$. The noise in the CSwap circuit $\calE_S$ is also assumed to be depolarizing in nature with an independent parameter $p$. For the ease to illustration, we consider $N=2$, i.e., $C$ acts on two qubits. This implies
\begin{align}
    \calE_S(\rho) & = (1-p)\rho + p\frac{I_{2N}}{2^{2N}} = (1-p)\rho + \frac{p}{4^{2N}}\sum_{i=0}^{4^{2N}-1}E_i \rho E_i^\dagger \nonumber \\
    \calE_C(\rho) 
     & = (1-q)\rho + q\frac{I_N}{2^N} 
\end{align}
The first constraint from the Theorem \ref{thm:vcp_noisy} is  $\mu_0 \geq q_0$. 
We know 
\begin{align}
    \mu_0 &= \frac{\sum_{a,b,c,d \in \{0,1\}^4}p_{ab}p_{cd}q_{a\oplus c}q_{b\oplus d}}{\sum_{a,b,c,d,k \in \{0,1\}^4}p_{ab}p_{cd}q_{a\oplus c\oplus k}q_{b\oplus d \oplus k}}
\end{align}
For $\calE_S$ and $\calE_C$, we have 
\begin{align}
p_{ab} &= 
    \begin{cases}
        \round{1-p\round{1-\frac{1}{256}}} & \text{ if} \; (a,b) = (\vec{0},\vec{0})\\
        \cfrac{p}{256} & \text{ otherwise}
    \end{cases}\nonumber\\
q_{i} &= 
    \begin{cases}
        \round{1-q\round{1-\frac{1}{16}}} & \text{ if} \; i = \vec{0}\\
        \cfrac{q}{16} & \text{ otherwise}
    \end{cases}
\end{align}
This gives
\begin{align}
    \sum_{a,b,c,d \in \{0,1\}^4}p_{ab}p_{cd}q_{a\oplus c}q_{b\oplus d} &= \frac{1}{256} \big( 256 - 510 p + 255 p^2  - 480 (-1 + p)^2 q + 225 (-1 + p)^2 q^2 \big), \nonumber\\
    \sum_{a,b,c,d \in \{0,1\}^4}p_{ab}p_{cd}q_{a\oplus c\oplus k}q_{b\oplus d\oplus k} &= \frac{1}{256} \big( q^2  + (-2 + p) p (-1 + q^2) \big) \text{ for all } k\in [1,2^{4}]. \nonumber
\end{align}
This gives 
\begin{align}
    \mu_0 = \frac{256 - 510 p + 255 p^2 - 480 (1-p)^2 q + 225 (1-p)^2 q^2}{16 \left( 16 - 30 p (1-q)^2 + 15 p^2 (1-q)^2 - 15 (2 - q) q \right)}\nonumber
\end{align}
Note that when $\calE_S$ and $\calE_C$ are depolarizing, the resulting $\calE_\mu$ is also depolarizing. This implies, $\mu_i = \frac{1-\mu_0}{4^N}$ for all $i\neq 0$. Further, if $\mu_0 \geq q_0$, this implies 
\[
\frac{\mu_i}{\mu_0} = \frac{1-\mu_0}{4^N\mu_0} \leq \frac{1-q_0}{4^N q_0} = \frac{q_i}{q_0}.
\]
Therefore, we only need to examine the relation between $\mu_0$ and $q_0$.
We note that 
\[
\mu_0 \geq q_0 \iff q_l(p) \leq q \leq \min\{q_u(p) ,1\} , \; 0 \leq p \leq p_l 
\]
where
\begin{figure}
    \centering
    \vspace{-5pt}
    \includegraphics[width=0.5\linewidth]{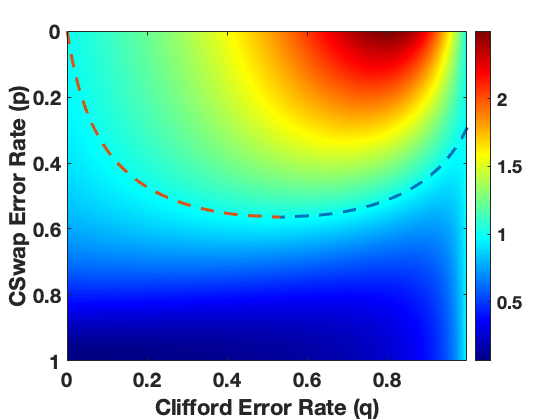}
    \caption{\emph{Ratio of $\mu_0$ to $q_0$
  for different values of the parameters 
$p$ and $q$}. The plot shows the ratio $\mu_0/q_0$ as a function of 
$p$ and $q$. The red and blue dashed curves are the plots for $q_l(p)$ and $q_u(p)$ as a function of $p$. Moreover, on the curve, we have the ratio as $1$ or $\mu_0=q_0$. The curve divides the plot into two regions: the upper region (above the curve) where $q_l(p) \leq q \leq q_u(p)$ and $0\leq p\leq p_u$, and thus the ratio exceeds one, and the lower region (below the line) where the ratio is less than one. }
\vspace{-10pt}
    \label{fig:noise_dep_analytical}
\end{figure}
\begin{align}
 q_l(p) &:=\frac{1}{15} \left( 8 - \frac{\sqrt{ \left( 64 + 79 (p-2) p \right)}}{(1- p)} \right) \nonumber \\
 q_l(p) &:=\frac{1}{15} \left( 8 + \frac{\sqrt{ \left( 64 + 79 (p-2) p \right)}}{(1- p)} \right) \nonumber \\
    & p_l := 1 - \frac{ \sqrt{1185}}{79} = 0.56425.\nonumber
\end{align}
Therefore, from Theorem \ref{thm:vcp_noisy}, the combination of CVaR and VCP will be advantageous in the regime when 
\[q_l(p)  \leq q \leq \min\{q_u(p) ,1\} ,\quad \text{and} \quad 0 \leq p \leq p_l .\] 
Note that when the error in CSwap gates in the VCP protocol is not accounted for, the only constraint on $q$ is $q\leq 1$. However, when the error is accounted, $q$ is constrained by an upper and a lower bound. 

Figure \ref{fig:noise_dep_analytical} shows the ratio of $\mu_0$ and $q_0$ for varying values of $p$ and $q$. The line in the middle of the plot separates the region when the ratio is more than one to when it is less than one. The region above the line corresponds to $\mu_0 \geq q_0$, i.e., the values of $(p,q)$ that allows the scheme to be beneficial from the perspective of mitigating the noise.




\begin{figure}[!htb]
\centering
\vspace{-5pt}
  \includegraphics[width=0.5\linewidth]{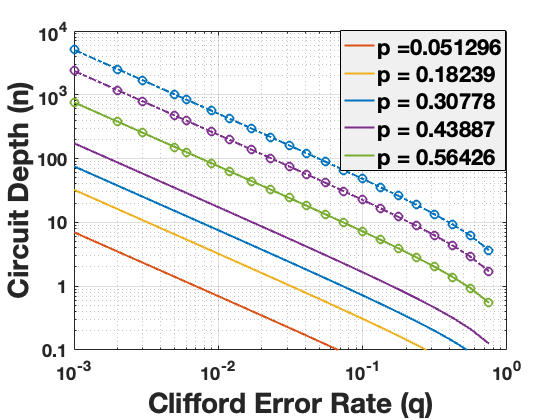}
  \vspace{-5pt}
    \caption{\emph{Lower and upper bounds on $n$ for different values of $p$ with varying $q$}.  The solid lines represent the lower bounds for each value of 
$p$, while the upper bounds are marked with circle markers, with no connecting lines. Each bound (lower and upper) is color-coded according to the corresponding value of the parameter $p$, allowing for easy comparison across different parameter values. The missing upper bound curves for some values of 
$p$ are due to the bound being infinite (trivial) for such $p$, which is reflected by the absence of markers at those locations. The x-axis represents the parameter $q$ and both axes are plotted on a logarithmic scale.}
    \label{fig:num_cnots_loglog}
\end{figure}

\subsection{Collection of IID gates}\label{sec:example2}
As another example, we consider a circuit that is constructed using a collection of $n$ two-qubit gates that are independent and identical in their error profile, with each having a depolarizing error of parameter $q$. The CSwap error is also assumed to be depolarizing with parameter $p$. To analyze this scenario, we note that all the gates and noise components can be grouped together. The gives the effective noise as a depolarizing noise with parameter $q_{eff} = 1-(1-q)^n$. Using the Example in Section \ref{sec:example1}, for a given $p$ and $q$, we can provide bounds on $n$ when a noisy-VCP still performs better than a no VCP technique. 
We need $p\geq p_l$ and $q_l(p)  \leq q_{eff} \leq \min\{q_u(p) ,1\}$, which gives
\begin{align}
    \left\lceil\frac{\log(1-q_l(p) )}{\log(1-q)}\right\rceil \leq n \leq  \left\lfloor\frac{\log(1-\min\{q_u(p),1\} )}{\log(1-q)}\right\rfloor
\end{align}
Figure \ref{fig:num_cnots_loglog} shows the variation of the above lower and upper bounds for different values of $p$ and $q$.


\section{Conclusion}\label{sec:conclusion}
In this work, we introduced a framework that combines VCP with CVaR optimization to enhance the quantum expectation value estimations in the presence of noise. By deriving conditions for comparing CVaR values across distributions and applying this to VCP, we established analytical bounds that demonstrate the method’s effectiveness in improving estimation accuracy.
Our approach, which relies solely on the noise profile of the quantum circuit, offers a general, robust error mitigation strategy applicable to a wide range of quantum observables. Numerical examples further illustrate the practical benefits of our method.
As future directions, we aim to extend these results to deeper circuits and derive bounds that can pave the way for more reliable quantum computations in near-term quantum systems. Additionally, we plan to experiment with the ideas presented on real quantum computers and explore the challenges in validating the theoretical findings.

\section{Acknowledgment}
This material was supported by the U.S. Department of Energy, Office of Science, Office of Advanced Scientific Computing Research, under the Accelerated Research in Quantum Computing (ARQC) program and by the U.S. Department of Energy, Office of Science, National Quantum Information Science Research Centers, Quantum Science Center
TAA was funded by the QSC to perform the analytical calculations and to write the manuscript along with the other authors, and by ARQC to investigate different examples. RBT and SE acknowledge the support of the NNSA's Advanced Simulation and Computing Beyond Moore’s Law Program at Los Alamos National Laboratory. LA-UR-25-20646.



\appendix
\section{Proof of the Lemmas and Theorems}

\subsection{Proof of Lemma \ref{lem:cvarIneq}}\label{proof:lem:cvarIneq}
    We provide the proof for the case when $F_X(x_\alpha) = \alpha$, i.e, $\cvar_{\alpha}[X] = \alpha^{-1} \EE[X;X\leq x_\alpha]$. The general case follows from extending the same arguments. The inequality relating $\tcvar$ and the expectation follows from \cite[Lemma 1]{barron2024provable}, and hence we only need to establish the relation between $\cvar_{\alpha_1}[X_1]$ and $\cvar_{\alpha_2}[X_2]$.
    The proof of the first inequality (from left) can be formulated as an optimization problem, where the PMFs $P_X$ and $P_{X_1}$ are given satisfying the constraint in the lemma, and we aim to find the optimal PMF $P_{X_2}$ that minimizes the $\cvar_{\alpha_2}[X_2]$. Observe that the $\cvar_{\alpha_2}[X_2]$ is minimized for the choice of $P_{X_2}$ that has the highest probabilities to the smallest values of $x \in \calX.$ However, a feasible $P_{X_2}$ needs to satisfy the lower and upper bounds. Therefore, the optimum can be constructed by assigning the probabilities in a fashion such that the lower bound is satisfied for all $x\in \calX$, and the remaining probability is assigned in a way that the smallest values of $x$ saturate the upper bound, until the total probability becomes $1$. Formally, we let
    \begin{align}
        P_{X_2}(x) &= 
        \begin{cases}
            \frac{C_1}{C_2}P_{X_1}(x) & \mbox{ for } x \leq x_0 \nonumber \\
            \frac{P_{X}(x)}{C_2}& \mbox{ for } x > x_0, \nonumber \\
        \end{cases}
    \end{align}
    where $x_0$ is such that
    \begin{align}
        \int_{-\infty}^{\infty} P_{X_2}(x) = \int_{-\infty}^{x_0} \frac{C_1}{C_2}P_{X_1}(x) + \int_{x_0}^{\infty} \frac{P_{X}(x)}{C_2} = 1. \nonumber
    \end{align}

    Define $x^{*}_{1}$, and $x^{*}_{2}$, as 
    \[x^{*}_{1} := \inf_{x} \{F_{X_1}(x) \geq \alpha_1\}, \quad \mbox{ and } \quad x^{*}_{2} := \inf_{x} \{F_{X_2}(x) \geq \alpha_2\} \]

    Firstly, consider the case when $x_{1}^* \leq x_0$. Then 
    \begin{align}
        \int_{-\infty}^{x_1^*} P_{X_1}(x) = \frac{1}{C_1} \implies  \int_{-\infty}^{x_1^*} \frac{C_1}{C_2}P_{X_1}(x) = \frac{1}{C_2}. 
    \end{align}
    Since $x_{1}^* \leq x_0$, the right hand side in the above inequality is in fact the PMF of $X_2$, and therefore, $x_{2}^* = x_1^*$, and 
    \begin{align}
        \cvar_{\alpha_2}[X_2] = C_2 \int_{-\infty}^{x_2^{*}} x P_{X_2}(x) = C_2 \int_{-\infty}^{x_1^{*}} x \frac{C_1}{C_2}P_{X_1}(x) =  C_1 \int_{-\infty}^{x_1^{*}} x P_{X_1}(x) =  \cvar_{\alpha_1}[X_1] 
    \end{align}

Now, we  consider the case when $x_{1}^* \leq x_0$. We begin by making the following observation: 
$$x_{1}^* \leq x_2^*.$$ 
This can be shown as follows. We know from the definitions that 

\begin{align}
    \int_{-\infty}^{x_1^*} P_{X_1}(x) = \frac{1}{C_1} &\implies \int_{x_0}^{x_1^*} P_{X_1}(x) = \frac{1}{C_1} - \int_{-\infty}^{x_0} P_{X_1}(x) \quad \mbox{and} \label{eq:defx1}\\
    \int_{-\infty}^{x_2^*} P_{X_2}(x) = \frac{1}{C_2} &\implies \int_{x_0}^{x_2^*} \frac{P_{X}(x)}{C_2} = \frac{1}{C_2} - \int_{-\infty}^{x_0} \frac{C_1}{C_2}P_{X_1}(x) \nonumber \\
    & \implies \int_{x_0}^{x_2^*} \frac{P_{X}(x)}{C_1} = \frac{1}{C_1} - \int_{-\infty}^{x_0} P_{X_1}(x).\label{eq:defx2}
\end{align}
From noting the right hand sides are the same for both the above expressions and the constraint $P_{X_1}(x) \geq \frac{P_X(x)}{C_1}$,  we have $x_{1}^* \leq x_2^*.$

Now, we have
\begin{align}
    \cvar_{\alpha_2}&[X_2] = C_2 \int_{-\infty}^{x_2^*} x P_{X_2}(x)  = C_1 \int_{-\infty}^{x_0} x P_{X_1}(x) +  \int_{x_0}^{x_1^*} x P_{X}(x)+  \int_{x_1^*}^{x_2^*} x P_{X}(x) \nonumber \\
    & = C_1 \int_{-\infty}^{x_0} x P_{X_1}(x) -  \int_{x_0}^{x_1^*} x \left({C_1}{P_{X_1}(x)} - P_{X}(x)\right)+  {C_1}\int_{x_0}^{x_1^*} x{P_{X_1}(x)} + \int_{x_1^*}^{x_2^*} x P_{X}(x) \nonumber \\
     & \overset{a}{\geq} C_1 \int_{-\infty}^{x_0} x P_{X_1}(x) -  x_1^*\int_{x_0}^{x_1^*} \left({C_1}{P_{X_1}(x)} - P_{X}(x)\right)+  {C_1}\int_{x_0}^{x_1^*} x{P_{X_1}(x)} + x_1^*\int_{x_1^*}^{x_2^*} P_{X}(x) \nonumber \\
     & = C_1 \int_{-\infty}^{x_0} x P_{X_1}(x) +  {C_1}\int_{x_0}^{x_1^*} x{P_{X_1}(x)} -  x_1^*{C_1} \left[\int_{x_0}^{x_1^*} {P_{X_1}(x)}  - \int_{x_0}^{x_2^*} \frac{P_{X}(x)}{C_1}\right] \nonumber \\
      & \overset{b}{=} C_1 \int_{-\infty}^{x_1^*} x P_{X_1}(x) = \cvar_{\alpha_1}[X_1],
\end{align}

where $(a)$ follows from $P_{X_1}(x) \geq \frac{P_X(x)}{C_1}$ and using the argument that for any $f:\calX \rightarrow \RR$, and $a<b \in \RR$, if $f(x) \geq 0 \; \forall \; x \in [a,b] $, then $$a\int_a^b f(x) dx  \leq \int_a^b x f(x) dx \leq b\int_a^b f(x) dx,$$ and $(b)$ follows \ref{eq:defx1} and \ref{eq:defx2}.

Therefore, the minimum value of $\cvar_{\alpha_2}[X_2]$ is larger than $\cvar_{\alpha_1}[X_1]$, for any given $P_{X_1}$, which completes the first inequality (from left) of the lemma. The last inequality follows by applying the first inequality to $-X_1$ and $-X_2$ in place of $X_1$ and $X_2$. The other inequalities follow from \cite[Lemma 1]{barron2024provable}, which completes the proof.

\subsection{Proof of Theorem \ref{thm:noisy_vcp}} 
\label{proof:thm:noisy_vcp}
Let $P_X(x), P_{X_{noisy}}(x), $ and $P_{X_{vcp}^L}(x)$, denote the distribution of  $X, X_{noisy}$,  and $X_{vcp}^L$, respectively.
    We first prove the claim for the measurement of operators in the computational basis, and then extend it to the measurement wrt to $H$. 
     Let $z \in \{0,2^n-1\}$, where $n$ denotes the number of qubits in the system. Note the following relations:
    \begin{align}\label{eq:compBasis_thm1}
    \bra{z}\tilde{\rho}\ket{z} &= p_0 \bra{z}\rho\ket{z} + \sum_{i=0}^{4^N-1} p_i \bra{z}E_i\rho E_i^\dagger\ket{z}
         \geq p_0\bra{z}\rho\ket{z}\nonumber \\ 
        \bra{z}\tilde{\rho}^{(L)}\ket{z} &= \frac{1}{\sum_{i=0}^{4^N-1}p_i^L} \left(p_0^L \bra{z}\rho\ket{z} + \sum_{i=0}^{4^N-1} p_i^L \bra{z}E_i\rho E_i^\dagger\ket{z}\right)
         \geq \frac{p_0^L \bra{z}\rho\ket{z}}{\sum_{i=0}^{4^N-1}p_i^L} 
    \end{align}
    
    \noindent Since $H$ is a diagonal Hamiltonian, we know there exists a mapping $h: \{0,1\}^n \rightarrow \RR$, such $P_X(x) = \sum_{z: h(z) = x}\bra{z}\rho\ket{z}, P_{X_{noisy}}(x) = \sum_{z: h(z) = x}\bra{z}\tilde{\rho}\ket{z}, $ and $P_{X_{vcp}^L}(x) = \sum_{z: h(z) = x}\bra{z}\tilde{\rho}^{(L)}\ket{z}, $

    \noindent Therefore, the inequality \eqref{eq:compBasis_thm1} implies 
    \begin{align}
        P_{X_{noisy}}(x) & = \sum_{z: h(z) = x}\bra{z}\tilde\rho\ket{z} \geq p_0\sum_{z: h(z) = x}\bra{z}\rho\ket{z} = p_0 P_X(x). \nonumber \\
        P_{X_{vcp}^L}(x) & = \sum_{z: h(z) = x}\bra{z}\tilde\rho^{(L)}\ket{z} \geq \frac{p_0^L }{\sum_{i=0}^{4^N-1}p_i^L}\sum_{z: h(z) = x}\bra{z}\rho\ket{z} = \alpha_L P_X(x).
    \end{align}
    Furthermore,
    \begin{align}
       \bra{z}\tilde{\rho}^{(L)}\ket{z} & = \frac{1}{\sum_{i=0}^{4^N-1}p_i^L} \left(p_0^L \bra{z}\rho\ket{z} + \sum_{i=0}^{4^N-1} p_i^L\bra{z}E_i\rho E_i^\dagger\ket{z}\right) \nonumber \\
       & = \frac{p_0^{L-1} }{\sum_{i=0}^{4^N-1}p_i^L} \left(p_0 \bra{z}\rho\ket{z} + \sum_{i=0}^{4^N-1} p_i \left(\frac{p_i}{p_0}\right)^{L-1}\bra{z}E_i\rho E_i^\dagger\ket{z}\right) \nonumber \\
       & \overset{(a)}{\leq} \frac{p_0^{L} }{p_0\sum_{i=0}^{4^N-1}p_i^L} \left(p_0 \bra{z}\rho\ket{z} + \sum_{i=0}^{4^N-1} p_i \bra{z}E_i\rho E_i^\dagger\ket{z}\right) = \frac{\alpha_L}{p_0} \bra{z}\tilde{\rho}\ket{z}
    \end{align}
    where $(a)$ uses $p_0 \geq p_i$, for all $i \in [1,4^N-1]$. This implies $P_{X_{vcp}^L}(x) \leq \frac{\alpha_L}{p_0} P_{X_{noisy}}(x),$
    Noting that $P_X(x), P_{X_{noisy}}(x), $ and $P_{X_{vcp}^L}(x)$, satisfy the hypotheses of Lemma \ref{lem:cvarIneq}, we obtain the result.

    
\subsection{Proof of Theorem \ref{thm:vcp_comp}}\label{proof:thm:vcp_comp}
Let $P_X(x), \tildeP_{X_L}(x), $ and $\tildeP_{X_M}(x)$, denote the distribution of  $X, \tilde{X}_L$,  and $\tilde{X}_M$, respectively.
     Let $z \in \{0,2^n-1\}$, where $n$ denotes the number of qubits in the system. We know from \eqref{eq:compBasis_thm1}
    \begin{align}\label{eq:compBasis1}
        \bra{z}\tilde{\rho}^{(L)}\ket{z} 
         \geq \frac{p_0^L \bra{z}\rho\ket{z}}{\sum_{i=0}^{4^N-1}p_i^L} 
    \end{align}
    
    \noindent Since $H$ is a diagonal Hamiltonian, we know there exists a mapping $h: \{0,1\}^n \rightarrow \RR$, such $P_X(x) = \sum_{z: h(z) = x}\bra{z}\rho\ket{z}, \tildeP_{X_L}(x) = \sum_{z: h(z) = x}\bra{z}\tilde{\rho}^{(L)}\ket{z}, $ and $\tildeP_{X_M}(x) = \sum_{z: h(z) = x}\bra{z}\tilde{\rho}^{(M)}\ket{z}, $

    \noindent Therefore, the inequality \eqref{eq:compBasis1} implies 
    \[\tildeP_{X_L}(x) = \sum_{z: h(z) = x}\bra{z}\tilde{\rho}^{(L)}\ket{z} \geq \frac{p_0^L }{\sum_{i=0}^{4^N-1}p_i^L}\sum_{z: h(z) = x}\bra{z}\tilde{\rho}^{(L)}\ket{z} = \alpha_L P_X(x).\] Similarly, 
    $\tildeP_{X_M}(x) \geq  \alpha_M P_X(x).$
    Furthermore,
    \begin{align}
       \bra{z}\tilde{\rho}^{(M)}\ket{z} & = \frac{1}{\sum_{i=0}^{4^N-1}p_i^M} \left(p_0^M \bra{z}\rho\ket{z} + \sum_{i=0}^{4^N-1} p_i^M\bra{z}E_i\rho E_i^\dagger\ket{z}\right) \nonumber \\
       & = \frac{p_0^{M-L} }{\sum_{i=0}^{4^N-1}p_i^M} \left(p_0^L \bra{z}\rho\ket{z} + \sum_{i=0}^{4^N-1} p_i^L \left(\frac{p_i}{p_0}\right)^{M-L}\bra{z}E_i\rho E_i^\dagger\ket{z}\right) \nonumber \\
       & \overset{(a)}{\leq} \frac{p_0^{M-L} }{\sum_{i=0}^{4^N-1}p_i^M} \left(p_0^L \bra{z}\rho\ket{z} + \sum_{i=0}^{4^N-1} p_i^L \bra{z}E_i\rho E_i^\dagger\ket{z}\right) = \frac{\alpha_M}{\alpha_L} \bra{z}\tilde{\rho}^{(L)}\ket{z}
    \end{align}
    where $(a)$ uses $p_0 \geq p_i$, for all $i \in [1,4^N-1]$. This implies $\tildeP_{X_M}(x) \leq \frac{\alpha_M}{\alpha_L} \tildeP_{X_L}(x),$
    Noting that $P_X(x), \tildeP_{X_L}(x), $ and $\tildeP_{X_M}(x)$, satisfy the hypotheses of Lemma \ref{lem:cvarIneq}, we obtain the result.


\subsection{Proof of Theorem \ref{thm:noisy_vcp_ratio_obs}}\label{proof:thm:noisy_vcp_ratio_obs}

We begin by computing $\langle X\otimes O\otimes I\rangle $.
Assuming the ancilla is not corrupted by noise, the state of the ancilla is $$\ketbra{+} = \frac{1}{2} \round{\ketbra{0} + |0\rangle\langle1| + |1\rangle\langle0| + \ketbra{1}}.$$ On measuring $X$, we have two terms whose expectations are not zero, one corresponding to $|0\rangle\langle1|$, and another corresponding to $|1\rangle\langle0|$. For the first term, we have the following
\begin{align}
    & \sum_{a,b,c,d,i,j} \!\!\!\!p_{ab}p_{cd}q_iq_j \Tr{ \!(O\!\otimes \!I) (E_{c}\!\otimes\! E_{d})(E_{i}\!\otimes \!E_{j})(E_{a}\!\otimes \!E_{b})\round{\!\rho\otimes \frac{I}{2^n}}\! S^\dagger (E^\dagger_{a}\!\otimes\! E_{b}^\dagger)(E^\dagger_{i}\!\otimes\! E_{j}^\dagger)(E^\dagger_{c}\!\otimes \!E_{d}^\dagger)S} \nonumber \\
    & = \sum_{a,b,c,d,i,j} p_{ab}p_{cd}q_iq_j \Tr{ (O \otimes I) (E_{c}E_{i}E_{a}\otimes E_{d}E_{j}E_{b})\round{\rho\otimes \frac{I}{2^n}} S^\dagger (E^\dagger_{a}E^\dagger_{i}E^\dagger_{c}\otimes E_{b}^\dagger E_{j}^\dagger E_{d}^\dagger)S} \nonumber \\
    & = \sum_{a,b,c,d,i,j} p_{ab}p_{cd}q_iq_j \Tr{ (O \otimes I) (E_{c}E_{i}E_{a}\otimes E_{d}E_{j}E_{b})\round{\rho\otimes \frac{I}{2^n}} (E_{b}^\dagger E_{j}^\dagger E_{d}^\dagger \otimes E^\dagger_{a}E^\dagger_{i}E^\dagger_{c})} \nonumber \\
    & = \sum_{a,b,c,d,i,j} p_{ab}p_{cd}q_iq_j \Tr{ (O \otimes I) (E_{c\oplus i\oplus a}\otimes E_{d\oplus j \oplus b})\round{\rho\otimes \frac{I}{2^n}} (E_{d\oplus j \oplus b}^\dagger \otimes E^\dagger_{c\oplus i\oplus a})} \nonumber \\
    & = \frac{1}{2^n}\sum_{a,b,c,d,i,j} p_{ab}p_{cd}q_iq_j \Tr{O E_{c\oplus i\oplus a}\rho E_{d\oplus j \oplus b}^\dagger  } \Tr{  E_{d\oplus j \oplus b}E^\dagger_{c\oplus i\oplus a}} \nonumber \\
    & = \sum_{a,b,c,d,i,j} p_{ab}p_{cd}q_iq_j \Tr{O E_{c\oplus i\oplus a}\rho E_{d\oplus j \oplus b}^\dagger  } \delta_{d\oplus j \oplus b,c\oplus i\oplus a} \nonumber \\
    & = \sum_{a,b,c,d,k} p_{ab}p_{cd}q_{a\oplus c \oplus k}q_{b\oplus d \oplus k}\Tr{O E_{k}\rho E_{k}^\dagger } = \sum_k \mu_k'\Tr{O E_{k}\rho E_{k}^\dagger },
\end{align}
where
\[
\mu_k' := {\sum_{a,b,c,d} p_{ab}p_{cd}q_{a\oplus c \oplus k}q_{b\oplus d \oplus k}}
\]
The other term corresponding to $|1\rangle\langle0|$ is also given by the same.
\begin{align}
    \sum_k \mu_k'\Tr{O E_{k}\rho E_{k}^\dagger }.
\end{align}
Therefore, 
\begin{align}
    \langle X\otimes O\otimes I\rangle = \sum_k \mu_k'\Tr{O E_{k}\rho E_{k}^\dagger }.
\end{align}
This implies, 
\begin{align}
    \langle X\otimes I\otimes I\rangle = \sum_k \mu_k'\Tr{E_{k}\rho E_{k}^\dagger } = \sum_k \mu_k',
\end{align}
which completes the proof.
    
\subsection{Proof of Theorem \ref{thm:vcp_noisy}}\label{proof:thm:vcp_noisy}
Let $P_X(x), P_{X_{noisy}}(x), $ and $P_{\bar{X}_{vcp}}(x)$, denote the distribution of  $X, X_{noisy}$,  and $\bar{X}_{vcp}$, respectively.
     For $z \in \{0,2^n-1\}$, we have the the following relations:
    \begin{align}\label{eq:compBasis_thm1_mu}
    \bra{z}\tilde{\rho}\ket{z} &= q_0 \bra{z}\rho\ket{z} + \sum_{i=0}^{4^N-1} q_i \bra{z}E_i\rho E_i^\dagger\ket{z}
         \geq q_0\bra{z}\rho\ket{z}\nonumber \\ 
        \bra{z}\calE_\mu(\rho)\ket{z} &= \mu_0 \bra{z}\rho\ket{z} + \sum_{i=0}^{4^N-1} \mu_i \bra{z}E_i\rho E_i^\dagger\ket{z} \geq \mu_0\bra{z}\rho\ket{z}
    \end{align}
    

    We know 
    \begin{align}
        P_{X_{noisy}}(x) & = \sum_{z: h(z) = x}\bra{z}\tilde\rho\ket{z} \geq q_0\sum_{z: h(z) = x}\bra{z}\rho\ket{z} = q_0 P_X(x). \nonumber \\
        P_{\bar{X}_{vcp}}(x) & = \sum_{z: h(z) = x}\bra{z}\calE_\mu(\rho)\ket{z} \geq 
        \mu_0 P_X(x).
    \end{align}
    Furthermore,
    \begin{align}
       \bra{z}\calE_\mu(\rho)\ket{z} & = \mu_0^L \bra{z}\rho\ket{z} + \sum_{i=0}^{4^N-1} \mu_i\bra{z}E_i\rho E_i^\dagger\ket{z} \nonumber \\
       & = \frac{\mu_0}{q_0} \left(q_0 \bra{z}\rho\ket{z} + \sum_{i=0}^{4^N-1} q_i \left(\frac{\mu_i}{\mu_0}\frac{q_0}{q_i}\right)\bra{z}E_i\rho E_i^\dagger\ket{z}\right) \nonumber \\
       & \overset{(a)}{\leq} \frac{\mu_0}{q_0} \left(q_0 \bra{z}\rho\ket{z} + \sum_{i=0}^{4^N-1} q_i \bra{z}E_i\rho E_i^\dagger\ket{z}\right) = \frac{\mu_0}{q_0} \bra{z}\tilde\rho\ket{z}
    \end{align}
    where $(a)$ uses $\frac{\mu_i}{\mu_0} \leq \frac{q_i}{q_0}$, for all $i \in [1,4^N-1]$. This implies $P_{\bar{X}_{vcp}}(x) \leq \frac{\mu_0}{q_0} P_{X_{noisy}}(x),$
    Since $P_X(x), P_{X_{noisy}}(x), $ and $P_{\bar{X}_{vcp}}(x)$, satisfy the conditions of Lemma \ref{lem:cvarIneq}, we have the result.





\bibliographystyle{IEEEtran}
\bibliography{references}

\end{document}